\newcommand{\comment}[1]{}
\newcommand{\X}{\textbf{X}}
\newcommand{\Y}{\textbf{Y}}
\newcommand{\Xt}{\textbf{X}_t}
\newcommand{\Yt}{\textbf{Y}_t}
\newcommand{\B}{\textbf{B}}
\newcommand{\Bt}{\textbf{B}_t}
\newcommand{\BtTilde}{\tilde{\textbf{B}_t}}
\newcommand{\alphabold}{\boldsymbol{\alpha}}
\newcommand{\alphau}{\boldsymbol{\alpha}_u}
\newcommand{\alphat}{\boldsymbol{\alpha}_t}
\newcommand{\alphab}{\boldsymbol{\alpha}}
\newcommand{\alphabEst}{\hat{\boldsymbol{\alpha}}}
\newcommand{\betamat}{\boldsymbol{\beta}}
\newcommand{\Xvec}{\boldsymbol{x}}
\newcommand{\Amatrix}{\boldsymbol{A}}
\newcommand{\AmatrixEst}{\hat{\boldsymbol{A}}}
\newcommand{\Dmatrix}{\boldsymbol{D}}
\newcommand{\Gammamat}{\boldsymbol{\Gamma}}
\newcommand{\Fmatrix}{\boldsymbol{F}}
\newcommand{\lambdamat}{\boldsymbol{\lambda}}
\newcommand{\etamat}{\boldsymbol{\eta}}
\newcommand{\Psimat}{\boldsymbol{\Psi}}
\newcommand{\Imatrix}{\boldsymbol{I}}
\newcommand{\Pmatrix}{\boldsymbol{P}}
\newcommand{\Qmatrix}{\boldsymbol{Q}}
\newcommand{\Lmatrix}{\boldsymbol{L}}
\newcommand{\kappamat}{\boldsymbol{\kappa}}
\newcommand{\kappamatEst}{\hat{\boldsymbol{\kappa}}}
\newcommand{\varphimat}{\boldsymbol{\varphi}}
\newcommand{\sigmamat}{\boldsymbol{\sigma}}
\newcommand{\sigmamatEst}{\hat{\boldsymbol{\sigma}}}
\newcommand{\thetavec}{\boldsymbol{\theta}}
\newcommand{\rhomatrix}{\boldsymbol{\Theta}}
\newcommand{\Corrmatrix}{\rhomatrix}
\newcommand{\CorrmatrixEst}{\hat{\rhomatrix}}
\newcommand{\limrhoI}{\lim_{\rhomatrix \rightarrow \Imatrix}}
\newcommand{\Indicator}{\mathcal{I}}
\newcommand{\corridx}{{mn}}
\newcommand{\Is}{\Imatrix^\corridx}
\newcommand{\corridxsecond}{{pq}}
\newcommand{\Ik}{\Imatrix^\corridxsecond}
\newcommand{\Ixi}{\Imatrix^{uu}}
\newcommand{\rhos}{\rho_\corridx}
\newcommand{\rhok}{\rho_\corridxsecond}
\newcommand{\E}{\mathbb{E}}
\newcommand{\R}{\mathfrak{R}}
\newcommand{\Qeq}{\mathfrak{B}}
\newcommand{\partialX}{\boldsymbol{\nabla}}
\newcommand{\sums}{\sum_{s=1}^n}
\newcommand{\sumk}{\sum_{k=1}^n}
\newcommand{\exparg}{2\kappa \sqrt{\delta} \tau}
\newcommand{\expargi}{2\kappa_i \sqrt{\delta} \tau}
\newcommand{\expargj}{2\kappa_j \sqrt{\delta} \tau}
\newcommand{\argmax}{\operatornamewithlimits{argmax}}
\newlength{\querylen}
\theoremstyle{plain}
\newtheorem{theorem}{Theorem}[section]
\newtheorem{lemma}[theorem]{Lemma}
\begin{document}

\date{\today}

\title{Trading multiple mean reversion}
\author[1]{E. Boguslavskaya\thanks{$^\ast$
		Email: elena.boguslavskaya@brunel.ac.uk}}
\author[2]{M. Boguslavsky\thanks{$^\ast$
	Email: michael@boguslavsky.net}}
\author[3]{D. Muravey\thanks{$^\ast$
   Email: d.muravey87@gmail.com}}
\affil[1]{Brunel University London, UK}
\affil[2]{TradeTeq, UK}
\affil[3]{Lomonosov State University, Moscow, Russia}

\maketitle
\begin{abstract}{
How should one construct a portfolio from multiple mean-reverting assets?
Should one add an asset to portfolio even if the asset has zero mean reversion?
We consider a position management problem for an agent trading multiple
mean-reverting assets. We solve an optimal control problem for an agent
with power utility, and present a semi-explicit solution. The nearly explicit
nature of the solution allows us to study the effects of parameter
mis-specification, and derive a number of properties of the optimal solution.}
\end{abstract}

%

\tableofcontents
\section{Introduction}
One of the basic patterns of statistical arbitrage is mean reversion trading. Typically, one constructs a synthetic asset from one or several traded assets in such a way that its price dynamics is mean reverting. For example, for a pair of cointegrated assets there exists a mean-reverting linear combination of these assets. We will be calling this mean-reverting synthetic asset 
the {\it spread}. Generally, trading a mean reverting asset consists of buying the spread when it is below its mean level and sellings when it is above. The main question is how should the position be optimally managed with movement of the spread, trader's risk aversion, and time horizon. When there are several mean-reverting assets available, the trader should additionally solve a dynamic portfolio optimization problem in order to decide the best way to combine positions in these assets.

A number of papers addressed this problem by specifying a stochastic differential equation (SDE) for spread dynamics and finding the optimal strategy that optimizes the expected utility over the terminal wealth. The simplest example of mean-reverting dynamics in continuous time is the Ornstein--Uhlenbeck process, the continuous version of the AR(1) discrete process. For a single spread optimal trading strategy see \cite{BG}. For a more complicated mean-reverting dynamics we refer to paper \cite{ALTAY}, where the spread is modelled by a Markov modulated Ornstein--Uhlenbeck process, and to papers \cite{Fouque1} and \cite{Fouque2} where the authors consider fractional stochastic processes. The models with uncertainty in the mean reversion level were discussed in \cite{PAP01}. Other models for the spread have also been considered in the literature: for models based on Brownian brigde see \cite{LILO}, and for models based on CER/CIR processes see \cite{ZER}. A comprehensive review of the mean reversion trading can be found in \cite{LEUNG}. For methodology of statistical arbitrage we refer to \cite{AVELANEDA}. In \cite{PAP02} the authors assume different mean-reversion dynamics for multiple spread processes. They solve a portfolio optimization problem for several Geometric Brownian motions with multiple co-integration terms in drifts.

Usually a portfolio allocator has access to multiple investing opportunities.  Optimal sizing and timing of positions in each of these opportunities may be affected by positions in other assets and performance of those assets. To develop intuition about optimal dynamic allocation strategy, we generalise \cite{BG} to the case of multiple correlated Ornstein-Uhlenbeck and Brownian Motion processes. We solve the problem of maximization of a power utility over the terminal wealth for a finite horizon agent. Power utilities are a sufficiently broad family of utility functions, containing log-utility as a special case and linear utility as a limit case.

For the general problem, the optimal strategy is found in quasi--analytical form as a solution to a matrix Riccati ordinary differential equation. For several important special cases it is possible to solve this equation explicitly. We also propose an efficient approach to analyse effects of parameter mis-specification. Although the proposed model is very simple, one can observe non-trivial qualitative properties of the optimal strategy. The availability of a quasi--analytical solution allows us to study how the trading strategy is affected by correlation between spreads, and demonstrate the tradeoffs between "harvesting" each spread separately and hedging positions in correlated spreads.

The rest of this paper is organized as follows: in Section \ref{sec:Results} we give a brief overview of optimal strategy properties. In Section \ref{sec:Model} we specify our formal asset and trading model  and formulate a stochastic optimal control problem. Section \ref{sec:MainResult} contains explicit formulas for the optimal control and the value function. Section \ref{sec:Analysis1D} reminds main insights for the one-dimensional case. Optimal solution analysis is presented in Section \ref{sec:AnalysisMD}. In Section \ref{sec:MISSPEC}, we present an ODE based framework to analyse the effect of parameter mis-specification and calculate the moments of the terminal wealth's distribution. We then apply this framework to analyse strategy and value sensitivity to reversion rates misspecification.

Implementation source code in python and numerical implementation hints are available at \cite{CODE}.

\section{Main results}
\label{sec:Results}
The optimal solution has a number of interesting qualitative properties.

\begin{itemize}
	\item {\it Trade-off between hedging and spread extraction}
	\\
	In the case of a single asset, the position is managed to extract value from this asset movements. With several correlated mean-reverting assets, the optimal strategy also uses positions in assets with slower mean reversion to hedge positions in faster mean reverting assets.
	
	%
	\item {\it Impact of correlations}
	\\
	With all other parameters fixed, higher absolute values of correlations between asset driving processes are preferable to lower absolute values, as long as they stay below 1. See Section \ref{sec:correlation} for more details.
	
	\item {\it Impact of different reversion rates}
	\\
	With all other parameters fixed, higher reversion speeds are not always preferable for the trader. An asset with a lower reversion rate and a non-zero correlation with higher reversion rate assets, may be used primarily as a hedge for positions in these assets. Hedge efficiency may be declining with the increases in the lower reversion rate. See Section \ref{sec:example2d} for more details.
	
	\item {\it Cost of parameter misspecification}
	\\
	The optimal strategy has a strong dependence on assumed reversion rates. It is safer to underestimate reversion rates than to  overestimate them. The value function is more sensitive to errors in reversion rate ratios between assets than to joint correlated errors in rate estimates. See Section \ref{sec:MISSPEC}.
	
\end{itemize}

\section{The model}
\label{sec:Model}
\subsection{Price processes}
\label{sec:price_process}
Assume the canonical multivariate filtered probability space $\left( \Omega, \, \mathcal{F}, \, \mathbb{F}, \, \mathbb{P} \right)$ with filtration $\left(\mathcal{F}_t \right)_{t \geq 0}$  to satisfy the usual conditions, see e.g. \cite{SHREEVE}. On this space let $\left[X_t^1, X_t^2, \hdots, X_t^n \right]^{\top}$  be a collection of tradeable assets following  a multidimensional Ornstein--Uhlenbeck process
\begin{equation}
\label{eq:OU_def}
d\Xt = -\kappamat \Xt dt + \sigmamat d\Bt
\end{equation}
Here $\Bt = \left[B_t^1, B_t^2, \hdots, B_t^n \right]^{\top} $ is an $n$-dimensional Wiener process with correlation matrix $\rhomatrix \in \mathbb{R}^{n \times n}$ (i.e. $d\Bt d\Bt^{\top} = \Corrmatrix dt$), and
$\kappamat \in \mathbb{R}_{+}^{n \times n}$ and  $\sigmamat \in \mathbb{R}_{+}^{n \times n}$
are diagonal matrices with reversion rates and volatility entries correspondingly
\begin{eqnarray}
\nonumber
\begin{array}{c}
\kappamat = diag(\kappa_1, \kappa_2, \hdots, \kappa_n),
\\
\sigmamat = diag(\sigma_1, \sigma_2, \hdots, \sigma_n),
\end{array}
\quad
\rhomatrix = \begin{bmatrix}
1 & \rho_{12} & \hdots & \rho_{1n} \\
\rho_{21} & 1 & \hdots & \rho_{2n} \\
\vdots & \vdots & \ddots & \vdots \\
\rho_{n1} & \rho_{n2} & \hdots & 1
\end{bmatrix}
\end{eqnarray}
The diagonality of matrices $\kappamat$ and $\sigmamat$ means that all dependency between assets comes from the correlations between the driving Brownian motions. We also consider models with some assets exhibiting zero mean reversion (i.e. with some zero elements of $\kappamat$.) These assets are simply following correlated Brownian motions. However, we assume that elements of vector $\kappamat$ are not all zero to avoid a trivial problem. Correlation matrix $\rhomatrix$ should be symmetric and positive semi-definite with unit diagonal elements, $\rho_{ii} = 1$, $\rho_{ij} = \rho_{ji}$. We will assume that $\rhomatrix$ has full rank to avoid obvious arbitrages.

Without loss of generality, we can also assume that long-term means of each process are equal to zero. The general case can be reduced to equation (\ref{eq:OU_def}) by the substitution $\left[ \Xt - \thetavec \right] \rightarrow \Xt$, where $\thetavec$ is a vector of long term means.
Equation (\ref{eq:OU_def}) can be solved explicitly in terms of It\^o integral:
\begin{equation}
\nonumber
\Xt = e^{-\kappamat t} \X_0  + \int_{0}^t e^{-\kappamat(t - s)} \sigmamat d\B_s
\end{equation}
Here $e^{\Amatrix}$ is a matrix exponential:
\begin{equation}
\nonumber
e^{\Amatrix} = \sum_{k = 0} ^{\infty} \frac{1}{k!} \Amatrix^k, \quad \Amatrix^0 = \Imatrix.
\end{equation}

\subsection{Wealth process}
The problem can be treated in the general Merton portfolio optimisation framework, see \cite{ME}. Let vector $\alphat$
\begin{equation}
\nonumber
\alphat = \left[ \alpha_t^1, \alpha_t^2, \hdots, \alpha_t^n \right]^{\top}
\end{equation}
be a trader’s position at time $t$, i.e. the number of units of each asset held. This is the control in our optimization problem. Assuming zero interest rates and no transaction costs, for a given control process $\alphab_t$, the wealth process
$W_t^{\alphabold}$  is given by
\begin{eqnarray}
\nonumber
\label{eq:W_def}
dW_t^{\alphabold} =\alphat^{\top} d\Xt = \sum_{i=1}^{n} \alpha_t^i dX_t^i,
\end{eqnarray}
or in integral form
\begin{eqnarray}
\nonumber
W_t^{\alphabold} = W_t^{\alphabold} + \int_t^T \alphau^{\top} d \boldsymbol{X}_u = W_t^{\alphabold} + \sum_{i=1}^{n} \int_t^T \alpha_u^i dX_u^i.
\end{eqnarray}
\subsection{Normalization}
Without loss of generality, we assume unit noise magnitudes: i.e. $\sigmamat = \Imatrix$. For the general case, the following parametrisation should be used:
\begin{eqnarray}
\nonumber
\Xt \rightarrow \sigmamat^{-1}\Xt,
\quad \alphab_t \rightarrow \sigmamat \alphab_t.
\end{eqnarray}

\subsection{Value function}
The value function $J(W^{\alphabold}_t, \Xt, t) : \mathbb{R}^{+} \times \mathbb{R}^n \times [0, T] \rightarrow \mathbb{R} $  is the supremum over all admissible controls of the expectation of the terminal utility conditional on the information available at time $t$
\begin{equation}
\nonumber
\label{eq:J_def}
	J(w, \Xvec, t) = \sup_{\alphat \in \mathcal{A}} \E\left[U(W^{\alphabold}_T) | W^{\alphabold}_t = w, \, \Xt = \Xvec  \right],
\end{equation}
where the set of admissible controls $\mathcal{A}$ is defined as
\begin{eqnarray}
\label{eq:Admissible_Set}
\mathcal{A} =  \left\{ \alphab : [0, T] \times \Omega \rightarrow \mathbb{R}^n
\, | \, \alphat \in \mathcal{F}_t ,\,  \int_0^{\top} \left(W^{\alphabold}_t\right)^2 \sum_{i = 1}^n \left(  \alphat^i \Xt^i \right)^2 dt < \infty, \quad a.s \right\}
\end{eqnarray}
We consider a power utility function with the parameter $\gamma < 1$
\begin{equation}
\nonumber
U = U(W^{\alphabold}_T) = \frac{1}{\gamma} \left(W^{\alphabold}_T\right)^\gamma.
\end{equation}
The relative risk aversion is measured by $1-\gamma$. It is convenient to use another measure $\delta$ which is also known as a distortion rate (see \cite{ZAR01})
\begin{equation}
\nonumber
\label{eq:delta_def}
\delta = \frac{1}{1-\gamma}, \quad  0 < \delta < \infty
\end{equation}
so the smaller $\delta$ is, the less risk averse the agent. The case $\gamma=0$ corresponds to the logarithmic utility function and the investor with $\gamma \rightarrow 1$ is a risk seeking investor.

\section{Main result}
\label{sec:MainResult}
\subsection{The Hamilton--Jacobi--Belman equation}
Our aim is to find the optimal control $\alphab^*(W^{\alphabold}_t, \Xt, t)$ and the value function $J(W^{\alphabold}_t, \Xt, t)$ as the functions of wealth $W^{\alphabold}_t$, prices $\Xt$ and time $t$. The Hamilton--Jacobi--Bellman equation is
\begin{equation}
\label{eq:HBJ_eq}
\sup_{\alphab}
\left( \left( \partial /\partial t  + \mathcal{L}  \right)J \right) =0.
\end{equation}
Here $\mathcal{L}$ is the infinitesimal generator of the wealth process $W^{\alphabold}_t$:
\begin{eqnarray}
\nonumber
\label{eq:W_gen}
\mathcal{L} =
\frac{\alphab^{\top} \Corrmatrix \alphab}{2}   \frac{\partial^2}{\partial w^2}
+ \alphab^{\top} \Corrmatrix \partialX  \frac{\partial}{\partial w}
+ \frac{\partialX^{\top} \Corrmatrix \partialX}{2}
- \alphab^{\top} \kappamat \Xvec \frac{\partial}{\partial w}
- \Xvec^{\top} \kappamat\partialX
\end{eqnarray}
and the first order optimality condition on the control $\alphab^*$ is
\begin{equation}
\label{eq:FirstOrderConditions}
\alphab^*(w, \Xvec, t) = \frac{J_w }{J_{ww}} \Corrmatrix^{-1} \kappamat \Xvec - \frac{\partialX J_w}{J_{ww}}.
\end{equation}
The operator $\partialX$ denotes a vector differential operator
\begin{equation}
\nonumber
\partialX = \left[ \frac{\partial}{\partial x_1}, \frac{\partial}{\partial x_2}, \hdots, \frac{\partial}{\partial x_n}\right]^{\top}
\end{equation}
for which we define the following operations for any vectors $\textbf{a} \in \mathbb{R}^{1\times n}$ and matrices $\textbf{A} \in \mathbb{R}^{n \times n}$:
\begin{equation}
\nonumber
\textbf{a}^{\top} \partialX= \sum_{i =1}^{n} \textbf{a}_i \frac{\partial}{\partial x_i},
\quad
\partialX^{\top} \textbf{A} \partialX = \sum_{i = 1} ^{n} \sum_{j = 1} ^{n} \textbf{A}_{ij} \frac{\partial^2}{\partial x_i \partial x_j}.
\end{equation}
Note that the first summand in the right-hand side of (\ref{eq:FirstOrderConditions}) is the myopic demand term corresponding to a static optimization problem while the second term hedges from changes in the investment opportunity set. For a log utility investor ($\gamma = 0$ or, equivalently,  $\delta  = 1$) the second term vanishes (see \cite{ME}.)

Substituting this condition into the equation (\ref{eq:HBJ_eq}) for the value function, we obtain a non-linear PDE which can be linearised by the distortion transformation (see \cite{ZAR01}):
\begin{equation}
\nonumber
J(w,\Xvec, t) = \frac{w^\gamma}{\gamma} f^{1/\delta} (\Xvec, t).
\end{equation}
Here the function $f(\Xvec, t)$ is a solution to the Cauchy problem for the parabolic PDE:
\begin{eqnarray}
\nonumber
\label{eq:PDE_f_simplified}
\frac{\partialX^{\top} \Corrmatrix \partialX }{2} f
- \frac{\delta + 1}{2} \Xvec^{\top} \kappamat \partialX f
-\frac{\delta - 1}{2} \partialX^{\top} f \kappamat \Xvec +
\frac{\delta(\delta - 1)}{2} \Xvec ^{\top} \kappamat
\Corrmatrix^{-1} \kappamat \Xvec  f + \frac{\partial f}{\partial t} &=& 0.
\\ \nonumber
f(\Xvec, T) &=& 1.
\end{eqnarray}
\subsection{Solution}
The main equation (\ref{eq:PDE_f_simplified}) can be reduced to the matrix Riccati ODE. The value function $J$ and the optimal control $\alphab^*$ have quasi-analytic representations via solutions to this ODE.  Using an ansatz similar to \cite{BRENDLE} and \cite{PAP02}, we prove that the value function $J$ is given by
	\begin{equation}
	\nonumber
	\label{eq:value_function_distortion_represenation}
	J(w, \Xvec, t) = \frac{w^\gamma}{\gamma}
	\cdot
	\exp\left\{ \int_0^{T-t}  \frac{\textbf{Tr}\left( \Amatrix(u) \Corrmatrix \right) }{\delta} du  \right\}
	\cdot
	\exp\left\{\frac{\Xvec^{\top} \Amatrix(T-t)\Xvec}{\delta}
	\right\}  	
	\end{equation}
	where $\textbf{Tr}$ denotes trace operator and the function $\Amatrix : \mathbb{R}^+ \rightarrow  \mathbb{R}^{n\times n} \times \mathbb{R}^+$ is a matrix function of inverse time $ \tau = T -t$:
	\begin{equation}
	\nonumber
	\Amatrix(\tau) =
	\begin{vmatrix}
	A_{11}(\tau) & A_{12}(\tau) & \hdots & A_{1n} (\tau) \\
	A_{21}(\tau) & A_{22}(\tau) & \hdots & A_{2n} (\tau) \\
	\vdots       &    \vdots    & \ddots & \vdots        \\
	A_{n1}(\tau) & A_{n2}(\tau) & \hdots & A_{nn} (\tau) \\
	\end{vmatrix}
	\end{equation}
	which is defined as a solution to the following matrix Ricatti equation:
	\begin{eqnarray}
	\label{eq:A_Ricatti_Eq_Original}
	\Amatrix'(\tau) &=& \R_{\Corrmatrix, \kappamat, \delta} \Amatrix
	\\ \nonumber
	\Amatrix(0) &=& \textbf{0}
	\end{eqnarray}
	with $\R_{\Corrmatrix, \kappamat, \delta}$ denoting the nonlinear operator
	\begin{eqnarray}
	\label{eq:RicOperatorDef}
	\R_{\Corrmatrix, \kappamat, \delta} \Amatrix &=&
	\frac{\left( \Amatrix^{\top} + \Amatrix\right) \Corrmatrix \left( \Amatrix^{\top} + \Amatrix\right)}{2}
	\\ \nonumber
	&-& \frac{\delta + 1}{2} \kappamat \left(\Amatrix^{\top} + \Amatrix \right)
	- \frac{\delta - 1}{2} \left(\Amatrix^{\top} + \Amatrix \right) \kappamat
	+ \frac{\delta(\delta -1)}{2} \kappamat \Corrmatrix^{-1} \kappamat
	\end{eqnarray}
	The optimal strategy $\alphab^*$ has the following representation:
	\begin{eqnarray}
	\label{eq:alpha_Sym}
	\alphab^*(w, \Xvec, t) = w  \left[-\delta  \Corrmatrix^{-1} \kappamat  +
	\Amatrix + \Amatrix^{\top}  \right] \Xvec.
	\end{eqnarray}
Introducing a new matrix $\Dmatrix$ as
\begin{equation}
	\nonumber
	\Dmatrix(\tau) = \delta \Corrmatrix^{-1} \kappamat
	 -\left(\Amatrix(\tau) + \Amatrix^{\top}(\tau) \right)
\end{equation}
	we get the following formula for optimal strategy $\alphab^*$:
	\begin{eqnarray}
	\label{eq:alpha_D}
	\alphab^*(w, \Xvec, t) = -w \Dmatrix(\tau) \Xvec
	\end{eqnarray}
	Matrix $\Dmatrix$ can be found directly from another Riccati ODE:
	\begin{eqnarray}
	\label{eq:Riccati_D}
	\Dmatrix'(\tau) &=&  -\Dmatrix^{\top} \Corrmatrix \Dmatrix + \delta\kappamat \Corrmatrix^{-1} \kappamat.
	\\ \nonumber
	\Dmatrix(0) &=& \delta \Corrmatrix^{-1} \kappamat.
	\end{eqnarray}	
If one only needs the optimal control it is sufficient to solve the simpler equation (\ref{eq:Riccati_D}). To find the value functions, one needs to solve the more complex system (\ref{eq:A_Ricatti_Eq_Original}.)

Optimality of the candidate control $\alpha^*$ can be verified using the same arguments as in \cite{PAP02} (see also \cite{DAV01} and \cite{DAV02}.)

\section{Analysis. Review of the one-dimensional case}
\label{sec:Analysis1D}
\subsection{The problem}
Before we analyse the multidimensional case, let us present a short review of the one-dimensional case, for more details see \cite{BG}. It is obtained from our problem by setting $n = 1$ in all formulas from Section \ref{sec:price_process}. To be more precise, we consider mean-reverting asset $X_t$ which follows an Orntein--Uhlenbeck process with zero mean and unit variance:
\begin{equation}
\nonumber
dX_t = -\kappa X_t dt + dB_t
\end{equation}
and the wealth process $W^{\alphabold}_t$ generated by the trading strategy $\alpha$:
\begin{equation}
\nonumber
	dW^{\alphabold}_t = \alpha_t dX_t.
\end{equation}
We are looking for the maximizer $\alpha^*$ of the expected utility over the terminal wealth $W^{\alphabold}_T$:
\begin{equation}
\nonumber
\alpha^* = \argmax_{\alpha} \left[ \E_t\left[U(W^{\alphabold}_T) \right] \right].
\end{equation}
\subsection{The structure of the optimal strategy}
The optimal control $\alpha^*$ can be expressed as
\[
\alpha^*(w,x,t) = -wD_\kappa(T-t)x,
\]
where the function $D_\kappa(\tau)$ is a solution to the following Riccati equation:
\begin{eqnarray}
\label{eq:D_1D}
D_\kappa' &=& -D_\kappa^2 + \delta k^2
\\ \nonumber
D_\kappa(0) &=& \delta \kappa.
\end{eqnarray}
This one-dimensional problem (\ref{eq:D_1D}) can be solved explicitly (this can be done via the substitution $\tau(D_\kappa) = D_\kappa^{-1}$). The function $D_\kappa(\tau)$ is a shifted and scaled sigmoid function of the inverse time $\tau = T - t$ :
\begin{eqnarray}
\nonumber
D_\kappa(\tau) =\kappa \sqrt{\delta}
\frac{\sqrt{\delta} \cosh{\kappa \sqrt{\delta} \tau} + \sinh{\kappa \sqrt{\delta} \tau}}{\sqrt{\delta} \sinh{\kappa \sqrt{\delta} \tau} + \cosh{\kappa \sqrt{\delta} \tau} }
\end{eqnarray}

It is worth to mention that for $\gamma < 0 $ the function $D_\kappa$ can be represented as
\begin{eqnarray}
\nonumber
D_\kappa(\tau) =\kappa \sqrt{\delta}  \tanh \left(\kappa \sqrt{\delta} \tau + \varphi \right), \quad
\tanh \varphi = \sqrt{\delta}
\end{eqnarray}

\begin{figure}
\begin{center}
\resizebox*{13cm}{!}{\includegraphics{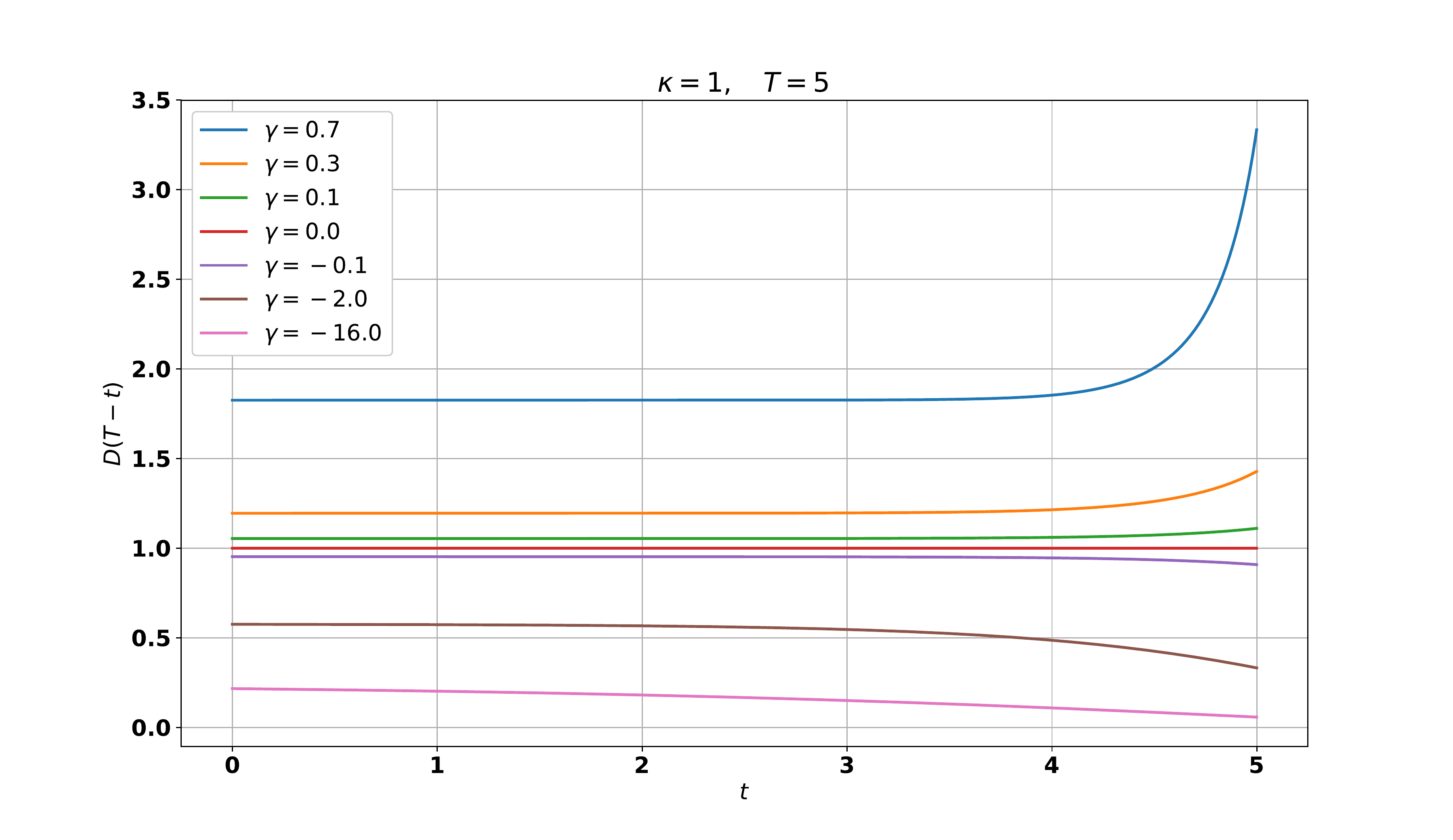}}
\caption{\label{fig:D1d}
Position size multiplier $D(T-t)$ for different values of risk aversion
}
\end{center}
\end{figure}

The behavior of the function $D_\kappa(T-t)$ depends on the value of risk aversion $\gamma$: an agent with negative gamma (less risk averse than log-utility inversor) becomes less agressive if time approaches to the terminal time while traders with positive gamma become more aggressive (see Figure \ref{fig:D1d}). For the log-utility agent ($\gamma = 0$, red line on Figure \ref{fig:D1d}) the optimal strategy is static, i.e.  $D_\kappa(\tau) \equiv const$.

\subsection{Value function structure}
The value function $J(w,x,t)$ can be split into three terms:
\begin{eqnarray}
\nonumber
J(w,x, t) = \underbrace{\frac{w^\gamma}{\gamma}}_{\textbf{a}}
\cdot
\underbrace{\exp\left\{ -\int_0^{T-t}  \frac{D(u) - \delta \kappa}{2\delta} du  \right\}}_{\textbf{b}}
\cdot
\underbrace{\exp\left\{ - \frac{x^2(D(T-t) - \delta \kappa)}{2\delta}
\right\} }_{\textbf{c}}
\end{eqnarray}
which can be interpreted as follows:
\begin{itemize}
\item \textbf{a}: present wealth utility,
\item \textbf{b}: time value (utility of future expected opportunities),
\item \textbf{c}: instrinsic value (utility of the immediate investment opportunity set.)
\end{itemize}

\subsection{Wealth process structure}
The stochastic process $W^{\alphabold}_t$ generated by the optimal strategy $\alphab^*$ can be represented as (for more details see \ref{app:SDE_sol})
\begin{eqnarray}
\nonumber
\label{eq:W_for_1D}
\log \left(\frac{W^{\alphabold}_t}{W^{\alphabold}_s}\right) = \underbrace{\int_{s}^t \frac{D_\kappa(T-u) - \delta \kappa^2 X_u^2}{2} du}_{\textbf{a}} + \underbrace{\frac{X_s^2 D_\kappa(T-s) - X_t^2 D_\kappa(T-t)}{2}}_{\textbf{b}}.
\end{eqnarray}
So the log return of wealth between times $s$ and $t$ is the sum of
\begin{itemize}
	\item \textbf{a}:  profit/loss from dynamic trading in the time period $[s, t]$,
	\item \textbf{b}:  profit/loss on position open at at time $s$.
\end{itemize}

\subsection{Monte Carlo simulations}
The higher mean reversion speed $\kappa$ makes trader more aggressive. Authors also make the following observations based on Monte Carlo simulations:
\begin{itemize}
	\item
	The influence of mean reversion coefficient misspecification is asymetric.
	\item
	Trading with a conservatively estimated $\kappa$ reduces greatly the utility uncertainty. The overestimation of $\kappa$ leads to excessively aggressive positions.
	It is much safer to underestimate $\kappa$ than to overestimate it.
\end{itemize}

\section{Analysis. Multidimensional case.}
\label{sec:AnalysisMD}
The main difference between multidimensional and one dimensional case is that changes in some spreads may affect positions in other spreads via changes in risk exposures. Generally, one might expect two possible motivations to take a position in each of the assets: to extract value from its reversion or to hedge positions in other assets.

In the multidimensional case, the time decay function $\Dmatrix$ is a matrix. The main difficulty is that there are no known techniques to explicitly solve generic matrix Riccati equations. However, there are several important special cases in which explicit solutions can be obtained. We start our analysis with these cases; based on these formulas we can demonstrate the main principles of interaction between asset prices and optimal positions.

For the rest of the paper, we will analyse only the case $\X_0 \equiv \thetavec$, i.e. the long-term investment behavior of the value function $J(w, \textbf{0}, t)$.
\subsection{Explicitly solvable cases.}
\subsubsection{Non-correlated assets}
Assume that the asset processes are driven by non-correlated Wiener processes, $\Corrmatrix = \Imatrix$. We can expect that the optimal strategy is simply a vector of one dimensional optimal strategies for each asset. That is, a candidate optimal control is
\begin{equation}
\nonumber
\alphab^* = -w \Dmatrix(\tau) \Xvec,
\quad
\Dmatrix (\tau)= diag \left( D_{\kappa_1}(\tau), D_{\kappa_2}(\tau) \hdots,D_{\kappa_n}(\tau) \right), \quad \tau = T - t.
\end{equation}
For the definition of $D_\kappa$ see \ref{sec:Analysis1D}. One can directly confirm that this control is indeed optimal by checking that it solves the system (\ref{eq:Riccati_D}).

In this case, there are no interactions between the assets. The position in the $i$-th assets depends only on time $t$, current wealth and $i$-th asset parameters.

\subsubsection{Common reversion rate}
Another case that allows an explicit solution is when the correlations are non-trivial but the reversion rate $\kappa$ is the same for all assets $\kappamat = \kappa \Imatrix$. Recall SDE for the price process
\begin{equation}
\nonumber
d\Xt = -\kappa \Xt dt + d\Bt, \quad d\Bt d\Bt^{\top} = \Theta dt.
\end{equation}
We show that for this case the explicit solution can also be constructed.

Indeed, with a single common reversion rate, any non-zero linear combination $\Yt = \Lmatrix^{-1} \Xt$ of Ornstein--Uhlenbeck processes is also an Ornstein--Uhlenbeck process:
\[
d\Yt = -\kappa \Yt dt + d\BtTilde, d\BtTilde d\BtTilde^{\top}  = \Lmatrix^{-1} \Theta (\Lmatrix^{-1})^{\top} dt
\]
Here $\BtTilde$ is a $n$- dimensional Wiener process with correlation matrix
\[
\Lmatrix^{-1} \Theta (\Lmatrix^{-1})^{\top}.
\]
Assuming invertibility of $\Lmatrix$, one can find an optimal control $\alphab_{Y}$ for this new process $\Yt$ and then transform it to an optimal control for $\Xt$. The transformation is based on the following equality
\[
dW^{\alphabold}_t= \alphab_{Y}^{\top} d\Yt = \alphab_{X}^{\top} d\Xt, \quad \alphab_{X} (W^{\alphabold}_t, \Xt, t) = (\Lmatrix^{-1})^{\top} \alphab_Y(W^{\alphabold}_t, \Lmatrix^{-1} \Xt, t).
\]
The transformaton matrix $\Lmatrix$ is constructed as a Cholesky decomposition of correlation matrix $\Corrmatrix$ :
\[
\Lmatrix^{\top} \Lmatrix = \Lmatrix \Lmatrix^{\top} = \Corrmatrix, \quad
(\Lmatrix^{-1})^{\top} \Lmatrix^{-1} = \Lmatrix^{\top} (\Lmatrix^{-1})^{\top} = \Corrmatrix^{-1}.
\]
Applying this transformation, we obtain the following equation for the optimal control:
\begin{eqnarray}
\nonumber
\alphab^* = -w D_\kappa(T-t) \Corrmatrix^{-1} \Xvec.
\end{eqnarray}
Thus, the optimal trading rule can be interpreted as constuction of linearly independent factor portfolios and then trading them in the manner of the previous case. This is similar to the portfolio signal construction approach of \cite{KMP}.

In this case, there are also no interactions between the assets. The value function $J(w \textbf{0}, t)$ does not depend on asset correlations:
\[
J(w, \textbf{0}, t) = \frac{w^\gamma}{\gamma} \exp\left\{ n\int_0^{T-t}\frac{\delta \kappa - D_\kappa (u)}{2\delta}  du\right\}
\]

\subsubsection{Hedging a mean reverting asset via correlated Brownian Motions}
\label{sec:kappa00}
Let us consider a case when the tradeable asset set consists of a single mean-reverting asset and one or several correlated Brownian motions. We can also consider this case as the limiting case for tradeable asset sets where one asset' mean reversion rate $\kappa$ is very large relatively to all other asset' reversion rates.

Consider the following matrix of reversion rates:
\[
\kappamat = diag(\kappa, 0, 0, \hdots, 0).
\]
One can check by a direct calculation that the solution to the Riccati equation (\ref{eq:Riccati_D}) has the following form:

\begin{equation}
\nonumber
\label{eq:D_matrix_zero_kappa}
\nonumber
\Dmatrix(t) =
\begin{vmatrix}
\Dmatrix_{11} & 0 & \hdots & 0 \\
\Dmatrix_{21} & 0 & \hdots & 0 \\
\vdots       &    \vdots    & \ddots & \vdots        \\
\Dmatrix_{n1} & 0 & \hdots & 0 \\
\end{vmatrix}
\end{equation}
$\Dmatrix_{j1} = \delta \kappa \left(\Corrmatrix^{-1}\right)_{j1}$.
The term $\Dmatrix_{11}(\tau)$ can be derived from the following Riccati ODE:
\begin{eqnarray}
\nonumber
\Dmatrix_{11}'(\tau) &=& -\Dmatrix_{11}^2 + 2\delta (\zeta - 1) \kappa \Dmatrix_{11}
+\kappa^2 \delta \zeta \left( \delta (1-\zeta) + 1\right)
\\ \nonumber
\Dmatrix_{11}'(0) &=& \delta \zeta \kappa.
\end{eqnarray}
This ODE can be solved explicitly to yield the following formula for $\Dmatrix$:
\begin{eqnarray}
\nonumber
\Dmatrix_{11}(\tau) =
	\left\{
	\begin{array}{l}
	\kappa \lambda
	\frac{\delta \cosh \lambda \kappa \tau + \lambda \sinh \lambda \kappa \tau}
	{\delta \sinh \lambda \kappa \tau +\lambda \cosh \lambda \kappa \tau}
	+\delta \kappa(\zeta - 1),
	\quad \gamma <  1 /\zeta
	\\
	\kappa \delta
	\frac{1}
	{1+ \delta \kappa \tau}
	+\delta \kappa(\zeta - 1),
	\quad \gamma =  1 / \zeta
	\\
	\kappa \lambda
	\frac{\delta \cos \lambda \kappa \tau -\lambda \sin \lambda \kappa \tau}
	{\delta \sin \lambda \kappa \tau +\lambda \cos \lambda \kappa \tau}
	+\delta \kappa(\zeta - 1),
	\quad  1 /\zeta < \gamma < 1.
	\end{array}
	\right.
\end{eqnarray}
Here
\[
\zeta =  \left(\Corrmatrix^{-1}\right)_{11}, \quad \lambda = \sqrt{|\delta (\delta - 1) \zeta -\delta ^2|}
\]

Thus, in this case we trade the mean-reverting asset and hedge it via correlated Brownian motions. Both the mean revertion asset position and the hedging positions are larger for large correlations. Availability of correlated hedging assets allows us to take larger positions for given risk aversion and wealth.

\subsection{The structure of the optimal strategy}
To illustrate the structure of the optimal strategy, we expand the product $\Dmatrix(\tau) \Xvec$ in formula (\ref{eq:alpha_D}) for optimal control $\alphab^*$:
\begin{equation}
\nonumber
\begin{vmatrix}
\alphab^*_1 \\
\alphab^*_2 \\
\vdots \\
\alphab^*_n
\end{vmatrix}
= -w
\begin{vmatrix}
\Dmatrix_{11}(\tau) \Xvec_1 + \Dmatrix_{12}(\tau) \Xvec_2 +\hdots \Dmatrix_{1n} (\tau)\Xvec_n \\
\Dmatrix_{21}(\tau) \Xvec_1 + \Dmatrix_{22} (\tau)\Xvec_2 +\hdots \Dmatrix_{2n} (\tau) \Xvec_n \\
\vdots \\
\Dmatrix_{n1}(\tau) \Xvec_1 + \Dmatrix_{n2}(\tau) \Xvec_2 +\hdots \Dmatrix_{nn} (\tau) \Xvec_n \\
\end{vmatrix}
\end{equation}
The summand $\Dmatrix_{ii} x_i$ is a position size multiplier for a mean reversion trading of $i-th$ asset while $\Dmatrix_{ij} x_j$ is a quantity of $i-th$ asset required to hedge the position in $j-th$ asset. In case of non-correlated assets each $\Dmatrix_{ij} = 0$, for $i \neq j$. The quantities  $\Dmatrix_{ij}$ and $\Dmatrix_{ji}$ satisfy the following relations :
\begin{eqnarray}
\nonumber
\Dmatrix_{ij} +   \delta \Corrmatrix^{-1}_{ij} \kappa_j= \Dmatrix_{ji} + \delta \Corrmatrix^{-1}_{ij} \kappa_i.
\end{eqnarray}
Note that the difference between $\Dmatrix_{ij}$ and $\Dmatrix_{ji}$ does not depend on time $t$.

\subsection{Wealth dynamics}
Similarly to the one-dimensional case, the wealth process $W^{\alphabold}_t$ can be expressed as
\begin{eqnarray}
\label{eq:W_for_nD}
\begin{array}{c}
\log \left( \frac{W^{\alphabold}_t}{W^{\alphabold}_s}\right) = \overbrace{\int_{s}^t \frac{\textbf{Tr}\Corrmatrix \Dmatrix(T-u) -\delta \X_u^{\top} \kappamat \Corrmatrix^{-1} \kappamat \X_u
} {2}du }^{\textbf{a}}
+
\\
+
\underbrace{\frac{\X_s^{\top} \Dmatrix(T-s) \X_s - \X_t^{\top} \Dmatrix(T-t) \X_t}{2}}_{\textbf{b}}
+
\underbrace{\frac{1}{2} \int_{s}^t \X_u^{\top} \left[ \Dmatrix - \Dmatrix^\top\right]  d\X_u}_{\textbf{c}}
\end{array}
\end{eqnarray}

One term of equation (\ref{eq:W_for_nD}) that is missing in the one-dimensional case is $\textbf{c}$. This summand corresponds to hedging efficiency. It is easy to see that for cases $\Corrmatrix = \Imatrix$ or $\kappamat =\kappa \Imatrix$ this term vanises. As we mentined before, the case $\kappamat =\kappa \Imatrix$ can be reduced to the case $\Corrmatrix = \Imatrix$.
\subsection{Example. 2-dimensional model}.
\label{sec:example2d}

%

To illustrate interactions between reversion speed and correlation, let us consider a two-dimensional example in more details. We will use the following parameters for this illustration: numbers of assets be $n = 2$, noise magnitude $\sigmamat = \Imatrix$, long term mean and initial point $\thetavec = \X_0 = 0$, risk aversion $\gamma = -4$ and time horizon $T = 3$. We consider an optimal strategy for a portfolio of two correlated Ornstein--Uhlenbeck processes with $\kappa_1 = 1$ and different values of $\kappa_2$ and correlation $\rho$.
\[
n = 2,\quad \gamma = -4, \quad \sigmamat = \Imatrix, \quad \kappamat = diag(1,\kappa_2),
\quad
\thetavec = \X_0 = \textbf{0},
\quad
\Corrmatrix = \begin{bmatrix}
1 & \rho\\
\rho & 1
\end{bmatrix}
\]

Figure \ref{fig:JkappaRho} shows of the value function $J$ as a function of $\log (\kappa_2 / \kappa_1)$ ($\kappa_1 = 1$) for several different values of $\rho$. We are varying here the lower of two asset mean-reversion rates. It turns out that for sufficiently high correlation $\rho$, the value function has a proper minima as function of $\kappa_2$ and it becomes decreasing in $\kappa_2$ as correlation gets closer to $1$. This means that in these cases, one would prefer to have a lower value for the second asset'  mean-reversion rate to a slightly higher value (but not to a much higher value $\kappa_2>>\kappa_1$. Therefore, with more that one asset, a higher reversion rate is not always good for extracting value from trading, quite unlike the one-dimensional case.
\begin{figure}
	\begin{center}
		\resizebox*{13cm}{!}{\includegraphics{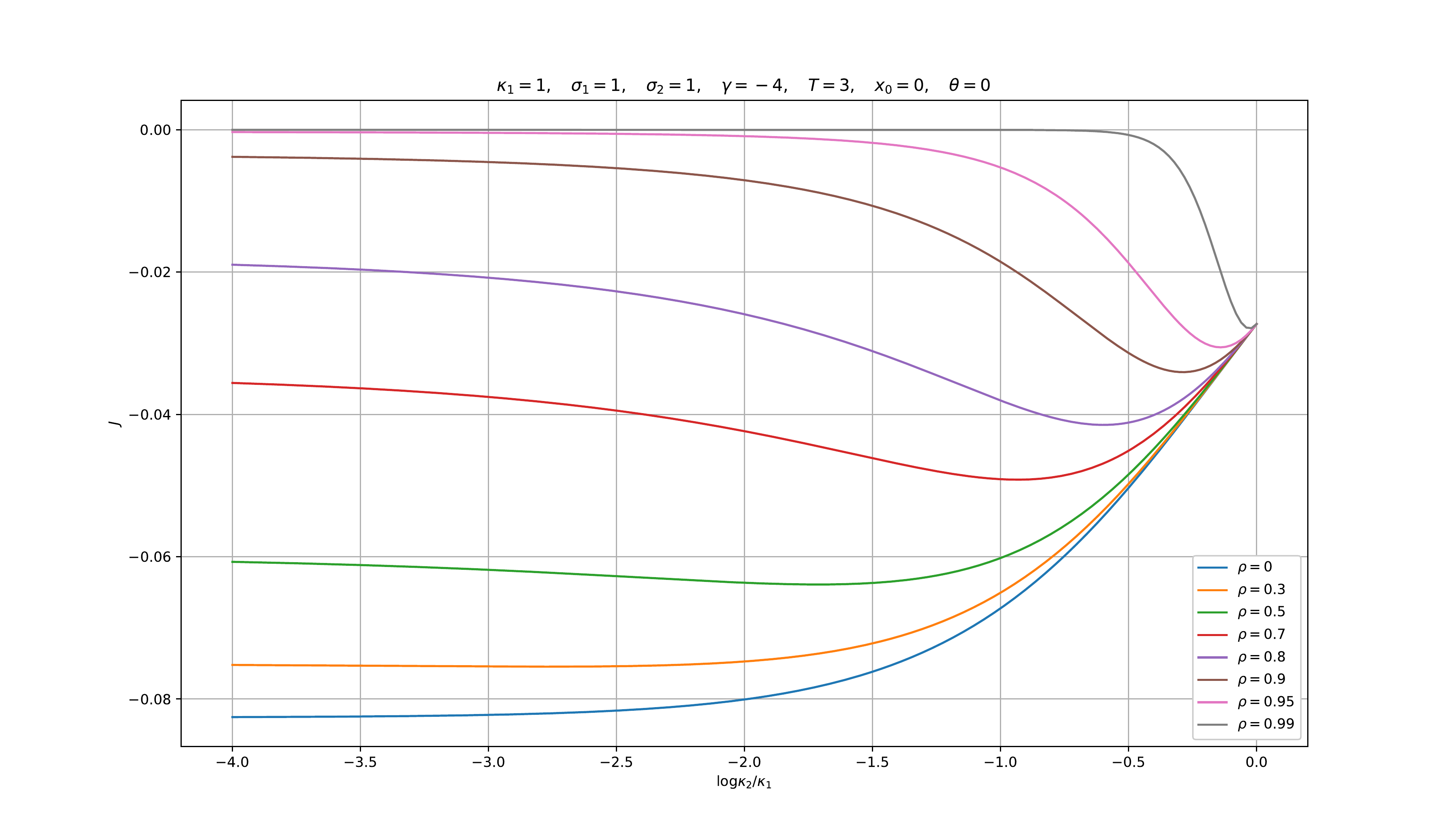}}
		\caption{
			\label{fig:JkappaRho}
			2D example. Value function for a range of values for $\kappa_2$ and correlation $\rho$.
		}
	\end{center}
\end{figure}

\subsection{Impact of correlation}
\label{sec:correlation}
We have seen in the previous section that the value function can be non-monotonic in mean-reversion rates. Let us show that it is always increasing with the correlation all other parameters being equal.

Suppose now that we start our trading process with no immediate trading opportunities (i.e. $\Xvec = \textbf{0}$). We consider $J(w, \textbf{0}, t)$ as the function on correlation coefficients $\rhos$. In the standard Markowitz portfolio optimization problem, one can construct more profitable portfolios when correlations are lower. In our setting, we can prove that the value function has a local minima at zero correlations $\Corrmatrix = \Imatrix$. Correlations between driving processes enable cross-hedging between positions in different assets and these increase the value function. We have already seen a similar beneficial effect of higher correlations in section \ref{sec:kappa00} for a special case of a single mean-reverting asset hedged with Brownian motions and the following theorem demonstrates that this effect holds in the general case as well.

\begin{theorem}
\label{thm:minima_corr}
In the absense of immediate trading opportunities ($\Xvec = \textbf{0}$) the value function $J(w, \textbf{0}, t)$ as a function of pairwise correlation coefficients $\rhos$ has a local minima at $\Corrmatrix = \Imatrix$.
\end{theorem}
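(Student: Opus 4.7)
My plan is to carry out a perturbation expansion of the Riccati ODE (\ref{eq:Riccati_D}) about $\Corrmatrix = \Imatrix$ and read off the Hessian of $\Psi(\Corrmatrix) := \int_0^{T-t}\textbf{Tr}(\Amatrix(u)\Corrmatrix)\,du$. Since $\delta>0$ and $\textrm{sign}(w^\gamma/\gamma) = \textrm{sign}(\gamma) = \textrm{sign}(\delta-1)$, the representation $J(w,\textbf{0},t) = (w^\gamma/\gamma)\exp(\Psi/\delta)$ tells me that $J$ has a local minimum at $\Corrmatrix=\Imatrix$ iff (i) $\partial_{\rhos}\Psi\big|_{\Imatrix}=0$ for every pair and (ii) the Hessian of $\Psi$ at $\Imatrix$ has sign equal to $\textrm{sign}(\delta-1)$. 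It will be convenient to work with $\Dmatrix$ rather than $\Amatrix$: from $\Dmatrix=\delta\Corrmatrix^{-1}\kappamat-(\Amatrix+\Amatrix^\top)$ one has $\textbf{Tr}(\Amatrix\Corrmatrix)=\tfrac12[\delta\textbf{Tr}(\kappamat)-\textbf{Tr}(\Dmatrix\Corrmatrix)]$, so only the $\Corrmatrix$-dependence of $\int\textbf{Tr}(\Dmatrix\Corrmatrix)\,du$ needs to be understood.

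I write $\Corrmatrix=\Imatrix+\epsilon\Ematrix$ with $\Ematrix$ symmetric and zero-diagonal, and expand $\Dmatrix=\Dmatrix_0+\epsilon\Dmatrix_1+\epsilon^2\Dmatrix_2+O(\epsilon^3)$. At leading order $\Dmatrix_0=\textrm{diag}(D_{\kappa_1},\ldots,D_{\kappa_n})$ is the uncorrelated-case solution, and linearizing (\ref{eq:Riccati_D}) yields
\[
\Dmatrix_1'+\Dmatrix_0\Dmatrix_1+\Dmatrix_1^\top\Dmatrix_0=-\Dmatrix_0\Ematrix\Dmatrix_0-\delta\kappamat\Ematrix\kappamat,\qquad\Dmatrix_1(0)=-\delta\Ematrix\kappamat.
\]
Because $\Dmatrix_0$ and $\kappamat$ are diagonal while $\Ematrix$ is off-diagonal, both the source and the initial datum live in the off-diagonal subspace; the diagonal part of the linear flow then decouples into $(\Dmatrix_1^{\textrm{diag}})'+2\Dmatrix_0\Dmatrix_1^{\textrm{diag}}=0$ and is killed by the zero initial condition, so $\Dmatrix_1(\tau)$ stays off-diagonal for every $\tau$. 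Consequently $\textbf{Tr}(\Dmatrix_1)=0$ and $\textbf{Tr}(\Dmatrix_0\Ematrix)=0$, which combine to give $\partial_\epsilon\Psi\big|_0=0$ and establish (i).

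For the Hessian, I first observe that the response $\Dmatrix_1$ to a single-pair perturbation $\Ematrix^{(mn)}=e_me_n^\top+e_ne_m^\top$ is supported only at positions $(m,n)$ and $(n,m)$, so the off-diagonal Hessian blocks across distinct pairs vanish and it suffices to treat $n=2$. In the $2\times 2$ block I set $p=D_{\kappa_m}$, $q=D_{\kappa_n}$, $b=(\Dmatrix_1)_{mn}$, $c=(\Dmatrix_1)_{nm}$. A direct calculation gives $b'=c'$, hence $b-c=\delta(\kappa_m-\kappa_n)$ is constant (recovering the identity after (\ref{eq:W_for_nD})). Introducing $s:=b+c$, $(\Dmatrix_2)_{mm}=p+g_1$, $(\Dmatrix_2)_{nn}=q+g_2$, and $s=-(p+q)+h$, I find that $g_1,g_2,h$ all vanish at $\tau=0$ and satisfy
\[
g_1'+2pg_1=-(p+c)^2,\qquad g_2'+2qg_2=-(q+b)^2,
\]
\[
h'+(p+q)h=\delta(\kappa_m-\kappa_n)\bigl[(\kappa_m-\kappa_n)-(p-q)\bigr],
\]
and the Hessian entry reduces to $\partial_\epsilon^2\Psi\big|_0=-\int_0^{T-t}(g_1+g_2+h)(u)\,du$.

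The sign analysis now splits in two. For $\delta>1$ the explicit 1D Riccati solution gives $p-q>\kappa_m-\kappa_n$ pointwise (from $\sqrt{\delta}<\delta$), so every forcing term is non-positive, $g_1,g_2,h\le 0$, and $\partial_\epsilon^2\Psi\big|_0\ge 0=\textrm{sign}(\delta-1)$. For $\delta<1$ the forcing for $h$ turns non-negative while $g_1,g_2\le 0$ persists, so the required integrated dominance $\int(g_1+g_2+h)\,du\ge 0$ has to be established globally; a Taylor expansion gives $(g_1+g_2+h)(\tau)=\delta(1-\delta)(\kappa_m-\kappa_n)^2\tau+O(\tau^2)>0$ near $\tau=0$, and I expect to close the argument either by a maximum-principle comparison on the coupled linear system for $(g_1,g_2,h)$ or by rewriting the integrand using $p'=-p^2+\delta\kappa_m^2$ and $q'=-q^2+\delta\kappa_n^2$ into a manifestly sign-definite form. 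The hard part will be this $\delta<1$ case: the integrand $g_1+g_2+h$ is sign-indefinite pointwise, so proving the integrated dominance of $h$ over $g_1+g_2$ requires careful use of the 1D Riccati structure.
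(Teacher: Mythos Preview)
Your approach is essentially the same as the paper's: perturb the Riccati ODE around $\Corrmatrix=\Imatrix$, observe that the first-order correction is purely off-diagonal (so the gradient of the exponent vanishes), and then analyse the sign of the diagonal entries at second order to get the Hessian. The paper works with $\Fmatrix=\tfrac12(\Amatrix+\Amatrix^\top)\Corrmatrix$ and the conjugation $\Gammamat=\Corrmatrix^{-1}\kappamat\Corrmatrix$ instead of your $\Dmatrix$, but since $\textbf{Tr}(\Fmatrix)=\textbf{Tr}(\Amatrix\Corrmatrix)$ is exactly the exponent and $\Dmatrix\Corrmatrix=\delta\Gammamat-2\Fmatrix$, this is a repackaging of the same linear algebra; your $(g_1,g_2,h)$ are the paper's $\varphimat_{ii},\varphimat_{jj}$ and a shift of $\lambdamat_{ij}$.

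The gap you flag in the $\delta<1$ case is real, and the paper does not truly close it either: the sign of the Hessian is deferred to a lemma whose proof reads, verbatim, ``Can be checked by the direct calculations.'' What the paper \emph{does} do, and you do not yet, is push one step further to a fully explicit closed form for the first-order off-diagonal entry $\lambdamat_{ij}$ (a ratio of products of $e^{(\kappa_i\pm\kappa_j)\sqrt{\delta}\tau}$ and $\omega=(1-\sqrt{\delta})/(1+\sqrt{\delta})$), and then writes the second-order diagonal ODE with every coefficient explicit in these hyperbolic terms. With those formulas in hand ``direct calculation'' is at least a well-defined task; your hoped-for maximum-principle route is unlikely to succeed as stated because, as you note, $g_1+g_2+h$ is pointwise sign-indefinite when $\delta<1$. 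One smaller point: your $\delta>1$ claim $p-q>\kappa_m-\kappa_n$ is correct, but ``from $\sqrt{\delta}<\delta$'' only covers the endpoints $\tau=0$ and $\tau=\infty$; to get it for all $\tau$ you need a short Duhamel argument on $v:=\partial_\kappa D_\kappa-1$, which satisfies $v'+2D_\kappa v=2(\delta\kappa-D_\kappa)\ge 0$ with $v(0)=\delta-1>0$.
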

\begin{proof}
	Recall the representation of the value function:
	\[
	J(w,\textbf{0}, t) = \frac{w^\gamma}{\gamma}
	\exp \left\{
	\frac{1}{\delta} \int_0^{T-t} \textbf{Tr} \left(\Fmatrix(u)\right) du
	\right\}
	\]
	where matrix $\Fmatrix$ is equal
	\begin{equation}
	\label{eq:F_def}
	\Fmatrix = \frac{1}{2}(\Amatrix + \Amatrix^{\top})\Corrmatrix.
	\end{equation}
	Define new matrix $\Gammamat$ :
	\begin{equation}
	\label{eq:Gamma_def}
	\Gammamat = \Corrmatrix^{-1} \kappamat \Corrmatrix
	\end{equation}
	Note that $\Gammamat$ is a result of similarity transformation of the matrix $\kappamat$ and $\limrhoI \Gammamat = \kappamat$.  For the matrix $\Fmatrix$ we have the following ODE:
	\begin{eqnarray}
	\Fmatrix' &=& 2 \Fmatrix^2 - \delta \left(\kappamat \Fmatrix + \Fmatrix \Gammamat \right) + \frac{\delta(\delta - 1)}{2} \kappamat \Gammamat
	\\ \nonumber
	\Fmatrix(0) &=& \textbf{0}.
	\end{eqnarray}
	Let $\rhos$ be an arbitrary correlation coefficient at the position $\corridx$ (i.e. $\corridx = (ij)$, $\Corrmatrix_{ij} =\Corrmatrix_{ji}= \rhos$) and let us consider the following partial derivatives:
	\begin{eqnarray}
	\nonumber
	\frac{\partial J(w,\textbf{0}, t)}{\partial \rhos} &=&
	\frac{J(w, \textbf{0}, t)}{\delta}
	\int_0^{T-t} \textbf{Tr}
	\left( \frac{\partial \Fmatrix(u)}{\partial \rhos}\right) du
	\\ \nonumber
	\frac{\partial^2 J(w,\textbf{0}, t)}{\partial \rhos \partial \rhok} &=&
	\frac{J(w, \textbf{0}, t)}{\delta}
	\int_0^{T-t} \textbf{Tr}
	\left( \frac{\partial^2 \Fmatrix(u)}{\partial \rhos \partial \rhok}\right) du
	\\ \nonumber
	\frac{\partial^2 J(w,\textbf{0}, t)}{\partial \rhos^2} &=&
	\frac{J(w, \textbf{0}, t)}{\delta}
	\int_0^{T-t} \textbf{Tr}
	\left( \frac{\partial^2 \Fmatrix(u)}{\partial \rhos^2}\right) du
	\end{eqnarray}
	
	We will prove the following properties for any $\corridx$ and $\corridxsecond$:
	\begin{eqnarray}
	\label{eq:dJdrho_grad}
	\limrhoI \frac{\partial J(w,\textbf{0}, t)}{\partial \rhos} &=& 0
	\\
	\label{eq:dJdrho_mixed}
	\limrhoI \frac{\partial^2 J(w,\textbf{0}, t)}{\partial \rhos \partial \rhok} &=& 0
	\\
	\label{eq:dJdrho_second_sign}
	sign \limrhoI \frac{\partial^2 J(w,\textbf{0}, t)}{\partial \rhos^2} &=& sign \gamma,
	\quad (\kappa_{i} \neq \kappa_{j})
	\\
	\label{eq:dJdrho_second_same_kappa}
	\limrhoI \frac{\partial^2 J(w,\textbf{0}, t)}{\partial \rhos^2} &=& 0 \quad (\kappa_{i} = \kappa_{j})
	\end{eqnarray}
%
From equation (\ref{eq:dJdrho_grad}), the point $\Corrmatrix = \Imatrix$ is an extrema point. Equation (\ref{eq:dJdrho_mixed}) implies that the Gessian matrix at  $\Corrmatrix = \Imatrix$ is a diagonal matrix.	Using Silvester's criterion we prove that Gessian matrix is a positive definite at the point $\Corrmatrix = \Imatrix$, for more details see Appendix \ref{app:TH1_proof}.
\end{proof}

\section{Wealth distribution moments and analysis of parameters mis-specification}
\label{sec:MISSPEC}
\subsection{Closed from formulas.}
\label{sec:AQ_system}
In practice, one does not know the true values for model parameters, so it is important to understand value function sensitivities to errors in parameters estimation. In this section, we present an ODE based framework for the analysis of parameter mis-specification sensitivity. We provide semi-explicit formulas for the value function corresponding to misspecified parameters.
Let $\kappamatEst, \sigmamatEst, \CorrmatrixEst$ be an estimates of reversion rates, volatility and correlation. We consider the control $\alphabEst$ as a function of these estimates
\[
\alphabEst = w\sigmamatEst^{-1} \left[ - \delta \CorrmatrixEst^{-1} \kappamatEst + \left(\AmatrixEst^{\top} + \AmatrixEst \right) \right] \sigmamatEst^{-1} \Xvec.
\]
Here the matrix $\AmatrixEst$ is a solution to the following ODE
\begin{eqnarray}
\label{eq:A_Ricatti_Eq_Est}
\AmatrixEst'(\tau) &=& \R_{\CorrmatrixEst, \kappamatEst, \delta} \AmatrixEst
\\ \nonumber
\AmatrixEst(0) &=& \textbf{0},
\end{eqnarray}
where the differential operator $\R$ is defined in (\ref{eq:RicOperatorDef}). The wealth process $\hat{W}_t$ generated by the strategy $\alphabEst$ is a solution to the following SDE
\begin{equation}
\label{eq:W_hat_def}
d \hat{W}_t = \alphabEst^{\top}_t d\Xt
\end{equation}

\begin{theorem}
	Let $P_{\epsilon}(w, \Xvec, t)$ be the following expectation of a function of terminal wealth $\hat{W}_T$ defined by (\ref{eq:W_hat_def}):
	\begin{eqnarray}
	\nonumber
	P_{\epsilon}(w, \Xvec, t) = \mathbb{E} \left[  \frac{\hat{W}_T^\epsilon }{\epsilon}  \,\Big|\, \hat{W}_t = w, \, \Xt = \Xvec \right].
	\end{eqnarray}
	The expectation $P_{\epsilon}(w, \Xvec, t)$ can be explicitly found in the following form
	\begin{eqnarray}
	\label{eq:Pdef}
	P_{\epsilon}(w, \Xvec, t) &=& \frac{w^{\epsilon}}{\epsilon}
	\cdot
	\exp\left\{
	\int_{0}^{T-t} \textbf{Tr}\left(\Corrmatrix \Qmatrix(u)\right) du
	\right\}
	\\ \nonumber
	&\cdot&
	\exp\left\{\Xvec^{\top} \sigmamat^{-1}\Qmatrix(T-t) \sigmamat^{-1} \Xvec
	\right\},
	\end{eqnarray}
	where matrix $\Qmatrix$ is a solution to Riccati equation
	\begin{eqnarray}
	\label{eq:Q_eq}
	\Qmatrix' &=& \Qeq \Qmatrix
	\\ \nonumber
	\Qmatrix(0) &=& \textbf{0}.
	\end{eqnarray}
	The nonlinear operator $\Qeq$ is given by
	\begin{eqnarray}
	\nonumber
		\Qeq \Qmatrix  &=&
	\frac{\left(\Qmatrix + \Qmatrix^{\top}\right)\Corrmatrix \left(\Qmatrix + \Qmatrix^{\top}\right)}{2}  +
	\\  \nonumber
	&+&
	\left( \epsilon \betamat^{\top} \Corrmatrix - \kappamat \right)\left(\Qmatrix + \Qmatrix^{\top}\right)
	+ \frac{\epsilon (\epsilon -1)}{2} \betamat^{\top} \Corrmatrix \betamat
	- \epsilon \betamat^{\top} \kappamat
	\end{eqnarray}	
	and the matrix $\betamat$ is defined as
	\begin{equation}
	\nonumber
	\label{eq:beta_def}
	\betamat = \sigmamat \sigmamatEst^{-1} \left[-\delta \CorrmatrixEst^{-1} \kappamatEst +
	\left(\AmatrixEst + \AmatrixEst^{\top} \right)\right] \sigmamatEst^{-1} \sigmamat
	\end{equation}
	here the matrix $\AmatrixEst$ is a solution to the equation (\ref{eq:A_Ricatti_Eq_Est}).
\end{theorem}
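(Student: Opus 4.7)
The plan is a Feynman--Kac reduction followed by an exponential--quadratic ansatz. Since $(\hat{W}_t, \Xt)$ is a Markov diffusion under the (possibly misspecified) strategy $\alphabEst$, the conditional expectation $P_\epsilon$ solves the Kolmogorov backward PDE $\partial_t P_\epsilon + \hat{\mathcal{L}} P_\epsilon = 0$ with terminal condition $P_\epsilon(w, \Xvec, T) = w^\epsilon/\epsilon$, where $\hat{\mathcal{L}}$ is the joint infinitesimal generator of $(\hat{W}, X)$. Rewriting the candidate strategy as $\alphabEst = w\,\sigmamat^{-1}\betamat\sigmamat^{-1}\Xvec$ via (\ref{eq:beta_def}), the joint dynamics yield quadratic variations $d\hat{W}_t\,d\hat{W}_t = w^2 \Xvec^\top\sigmamat^{-1}\betamat^\top\Corrmatrix\betamat\sigmamat^{-1}\Xvec\,dt$ and cross variation $d\hat{W}_t\,d\Xt = w\,\sigmamat\Corrmatrix\betamat\sigmamat^{-1}\Xvec\,dt$, so all five terms of $\hat{\mathcal{L}}$ (two drifts, two diffusive second-derivatives, one cross term) are explicit.

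Second, I insert the ansatz
\[
P_\epsilon(w,\Xvec,t) = \frac{w^\epsilon}{\epsilon} \exp\!\left\{g(\tau)\right\}\exp\!\left\{\Xvec^\top \sigmamat^{-1}\Qmatrix(\tau)\sigmamat^{-1}\Xvec\right\},\qquad \tau = T-t,
\]
so that the terminal condition forces $g(0) = 0$ and $\Qmatrix(0) = \textbf{0}$. Computing $\partial_t, \partial_w, \partial_{ww}, \partialX$ and $\partialX\partialX^\top$ on the ansatz, substituting into the PDE, and dividing through by $P_\epsilon$, I obtain an identity in $\Xvec$ whose only non-vanishing terms are constant in $\Xvec$ or of the form $\Xvec^\top(\cdots)\Xvec$; no linear terms arise, because every drift and cross-covariance contribution in $\hat{\mathcal{L}}$ picks up an additional factor of $\Xvec$ from either $\alphabEst$ or $\partialX P_\epsilon$. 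Matching coefficients therefore splits the problem into a matrix equation for $\Qmatrix(\tau)$ and a scalar ODE $g'(\tau) = \textbf{Tr}(\Corrmatrix\Qmatrix(\tau))$, which integrates to the trace factor appearing in (\ref{eq:Pdef}).

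Third, using the commutativity of the diagonal matrices $\sigmamat$ and $\kappamat$, I pull the $\sigmamat^{-1}$ factors out on both sides of the quadratic identity. After routine rearrangement, the quadratic part becomes $\Qmatrix' = \Qeq\Qmatrix$ with $\Qeq$ exactly as stated: the $\tfrac12(\Qmatrix+\Qmatrix^\top)\Corrmatrix(\Qmatrix+\Qmatrix^\top)$ block comes from $\tfrac12\partialX^\top\sigmamat\Corrmatrix\sigmamat\partialX P_\epsilon$, the $(\epsilon\betamat^\top\Corrmatrix - \kappamat)(\Qmatrix+\Qmatrix^\top)$ block combines the $\alphabEst^\top\sigmamat\Corrmatrix\sigmamat\partialX\partial_w$ and $-\Xvec^\top\kappamat\partialX$ contributions, and the $\Qmatrix$-free piece $\tfrac{\epsilon(\epsilon-1)}{2}\betamat^\top\Corrmatrix\betamat - \epsilon\betamat^\top\kappamat$ comes from the $\partial_{ww}$ and $\partial_w$ contributions. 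To close the Feynman--Kac loop, I apply It\^o's formula to the candidate $P_\epsilon(\hat{W}_t, \Xt, t)$: the PDE makes its drift vanish, so the process is a local martingale, and the exponential--quadratic form together with the Gaussian character of $\Xt$ and the admissibility of $\alphabEst$ provide enough exponential moments to upgrade this to a true martingale, yielding $P_\epsilon(w,\Xvec,t) = \E[\hat{W}_T^\epsilon/\epsilon \mid \hat{W}_t = w,\, \Xt = \Xvec]$.

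The main obstacle I expect is the bookkeeping in the third step: threading the $\sigmamat^{-1}$ factors through the five terms of $\hat{\mathcal{L}}$ and verifying that the raw matrix produced by the PDE groups into exactly the stated form of $\Qeq\Qmatrix$. A secondary subtlety is that the ansatz is insensitive to the antisymmetric part of $\Qmatrix$ (only $\Qmatrix + \Qmatrix^\top$ enters the quadratic form in the exponent), so the stated Riccati equation is effectively a lift of the symmetric-part dynamics to a possibly non-symmetric $\Qmatrix$; this lift is well posed because $\Qeq\Qmatrix$ depends on $\Qmatrix$ only through $\Qmatrix+\Qmatrix^\top$, so any choice of antisymmetric part is propagated consistently and leaves the exponent in (\ref{eq:Pdef}) unchanged.
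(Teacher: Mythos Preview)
Your proposal is correct and is essentially the approach the paper implicitly relies on: the paper states this theorem without proof, but your Feynman--Kac reduction followed by an exponential--quadratic ansatz exactly mirrors the derivation the paper gives in Appendix~A for the main value function (the only difference being that there the control is optimal, so a distortion step linearises the equation, whereas here the control is fixed and the PDE is already linear). Your identification of $\alphabEst = w\,\sigmamat^{-1}\betamat\,\sigmamat^{-1}\Xvec$, the five generator terms, the scalar ODE $g'(\tau)=\textbf{Tr}(\Corrmatrix\Qmatrix)$, and the matrix Riccati for $\Qmatrix$ all check out against the stated $\Qeq$; your remark on the antisymmetric part of $\Qmatrix$ is also apt and matches the same phenomenon for $\Amatrix$ in equation~(\ref{eq:alpha_Sym}).
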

In the setting $\epsilon = \gamma$ we obtain the expected utility corresponding to the misspecified parametes. The values $\epsilon = 1$ or $\epsilon = 2$  corresponds to the first two moments of $W_T$, so we can calculate Sharpe ratio:
\begin{equation}
\nonumber
Sh[\alphabEst] = \frac{P_{1}(w, \Xvec, t)}{\sqrt{2P_2(w,\Xvec, t) - P_1^2(w, \Xvec, t)}}.
\end{equation}
It is worth to mention, that the effects on misspecified long term mean level $\thetavec$ can be also analysed in the same way. For this case, we have to add extra term
\[
\exp\left\{ \Xvec^{\top} \textbf{V}\right\}
\]
to the equation (\ref{eq:Pdef}). Here $\textbf{V}$ is an $n \times 1$ vector function of inverse time $T - t$.

As an alternative, one can analyse the effect of parameter misspecification by using Monte-Carlo methods. However, from our point of view, the proposed ODE approach is computationally much more efficient than Monte-Carlo simulations.

\subsection{Impact of mis-specified reversion rates}
\begin{figure}
	\centering
	\begin{subfigure}
	\centering
	\includegraphics[height=2.2in]{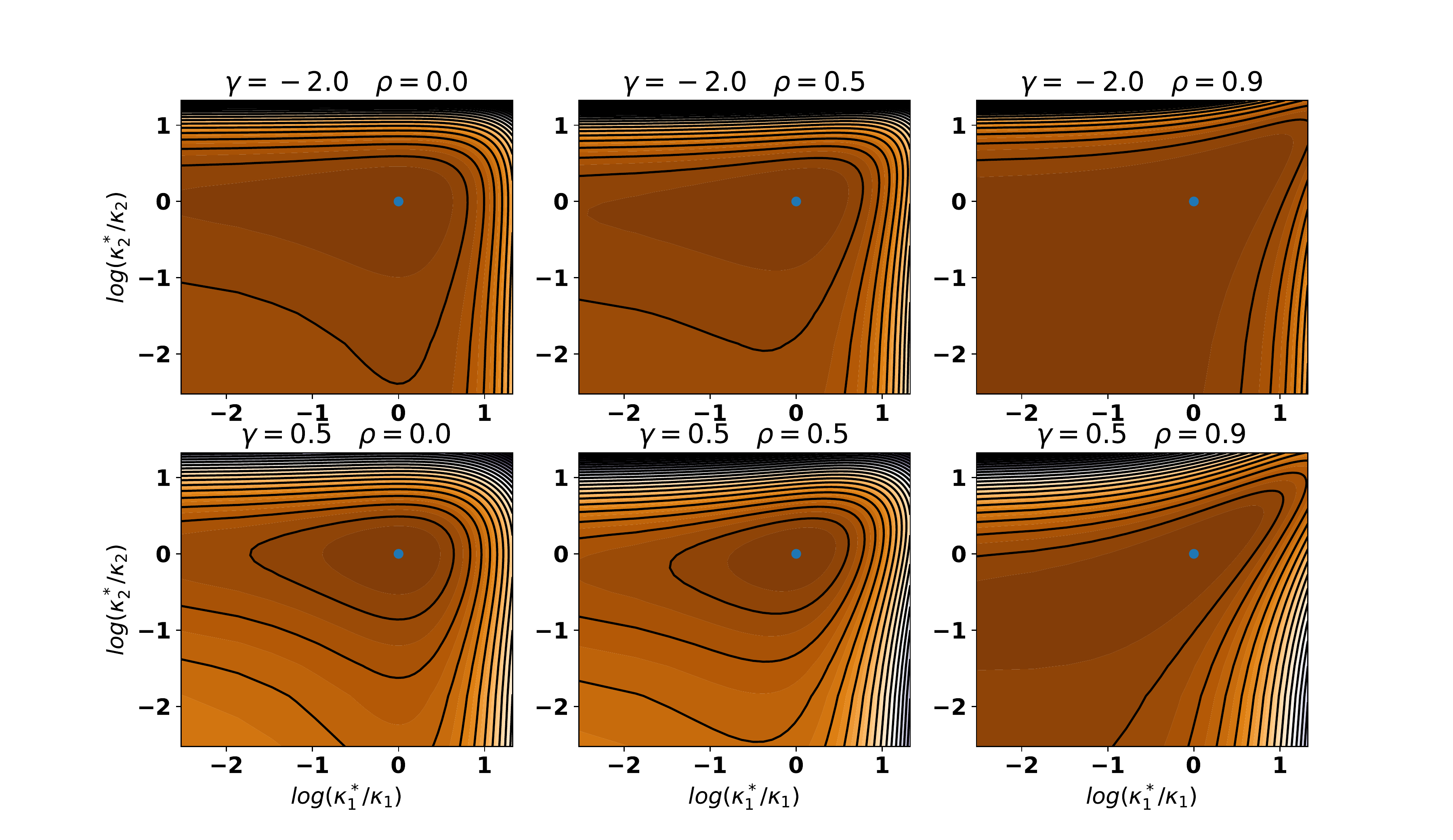}
	\end{subfigure}
~
	\begin{subfigure}
	\centering
	\includegraphics[height=2.in]{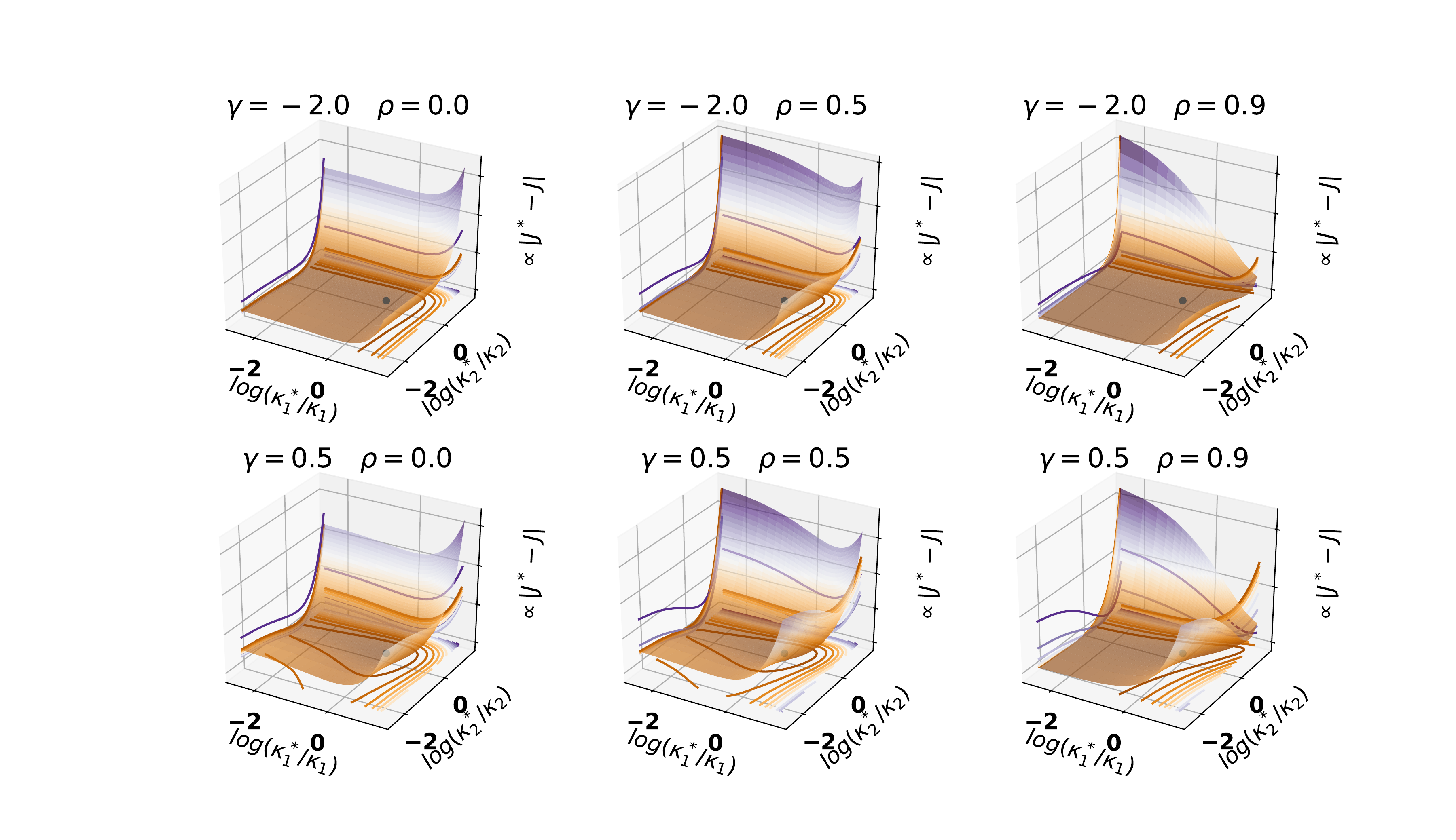}
	\end{subfigure}
\caption{Misspecified reversion rates.  Heatmap plot and 3D plot.}
		\label{fig:kappa_ms}
\end{figure}

We illustrate the method presented above on the analysis of misspecified reversion rates $\kappamat$. For simplicity, we consider the portfolios with only two assets. The results are presented on figure \ref{fig:kappa_ms}. We measure effect on misspecification by the difference between the value functions corresponding to true and mis-specified parameters (color and value of z-axis respectively).

Similarly to the one-dimensional case, the infuence of mean reversion coefficient misspecification is asymmetric.
Depending on the value of correlation, correct estimation of the ratio between reversion rates is more important than the estimations of the  exact values of each mean-reversion rate. It follows from the nature of optimal strategy: the faster mean-reverting asset is hedged in the slower one and the hedging accuracy depends on the ratio between reversion speeds.

\section*{Acknowledgments}
Dmitry Muravey acknowledges support by the Russian Science Foundation under the Grant number 20-68-47030.

\appendix

\section{Reduction of the HJB equation to the linear PDE}
\subsection{Distortion transformation}
\label{sec:Reduction}
The first order optimality condition on the control $\alphab^*$ yields the following linear system for the $\alphab^*$:
\begin{equation}
J_{ww} \Corrmatrix\alphab^* = \kappamat \Xvec J_w -\Corrmatrix \sigmamat \partialX J_{w}.
\end{equation}
The solution of this system reads
\begin{equation}
\alphab^* = \frac{1}{J_{ww}}  \left( \Corrmatrix^{-1} \kappamat \Xvec -  \partialX\right)J_w
\end{equation}
Using again the first order optimality condition, we get:
\[
(\alphab^*)^{\top} \kappamat \Xvec J_w - (\alphab^*)^{\top}  \Corrmatrix \partialX J_{w} = (\alphab^*)^{\top} \Corrmatrix  \alphab^* J_{ww}
\]
Substituting it into HJB equation we arrive at the following terminal problem for PDE:
\begin{eqnarray}
J_t  - \frac{1}{2}  \left(\alphab^*\right)^{\top} \Corrmatrix  \alphab^* J_{ww}
-
\Xvec^{\top} \kappamat \partialX J +  \frac{1}{2} \partialX^{\top}  \Corrmatrix \partialX J &=&0,
\\ \nonumber
J(w,\Xvec, T) &=& \frac{w^\gamma}{\gamma}
\end{eqnarray}
Plugging the exact value for an optimal control $\alphab^*$ yields non-linear PDE:
\begin{eqnarray}
\nonumber
J_t - \frac{1}{2} \frac{J_{w}^2}{ J_{ww}}
\left(\kappamat \Xvec \right)^{\top} \Corrmatrix^{-1} \left(\kappamat \Xvec \right) +
\frac{1}{2}\frac{J_{w}}{J_{ww}}
\left[
\left(\kappamat \Xvec \right)^{\top} \partialX J_w
+ \partialX^{\top} J_w  \left(\kappamat \Xvec \right)\right]
\\ \nonumber
- \frac{1}{2} \frac{1}{J_{ww}}  \partialX^{\top} J_w \Corrmatrix  \partialX J_w
-
\Xvec^{\top} \kappamat \partialX J +  \frac{1}{2} \partialX^{\top} \Corrmatrix \partialX J =0.
\end{eqnarray}
We proceed with an application of the so-called distortion transformation:
\begin{eqnarray}
J = \frac{w^\gamma}{\gamma} f^{1/\delta}(x,t), \quad \delta = \frac{1}{1-\gamma}
\end{eqnarray}
The exact formulas for the partial derivatives of the value function $J$ reads
\begin{eqnarray}
\nonumber
J_t &=& \frac{1}{\delta} \frac{J}{f} \frac{\partial f}{\partial t}, \quad
J_w = \frac{\gamma}{w} J, \quad
J_{ww} = \frac{\gamma(\gamma - 1) }{w^2} J
\\ \nonumber
\partialX J &=& \frac{1}{\delta} \frac{J }{f}  \partialX f, \quad
\partialX J_w =
\frac{\gamma}{w}  \frac{1}{\delta} \frac{J }{f}  \partialX f
\end{eqnarray}
Plugging in these expressions into terms of non-linear HJB PDE we get:
\begin{eqnarray}
\nonumber
\frac{1}{2} \partialX^{\top} \Corrmatrix  \partialX J
&=&
\frac{1}{2}\frac{1}{\delta} \frac{J}{f} \partialX^{\top} \Corrmatrix  \partialX f +
\frac{1}{2} \frac{1}{\delta} \left(\frac{1}{\delta} - 1\right) \frac{J}{f^2} \partialX^{\top} f  \Corrmatrix  \partialX f.
\\ \nonumber
-\frac{1}{2} \frac{1}{J_{ww}}  \partialX^{\top} J_w \Corrmatrix  \partialX J_w &=&
-\frac{1}{2} \frac{\gamma^2}{w^2} \frac{1}{\delta ^2}  \frac{J^2}{f^2} \frac{w^2}{\gamma (\gamma - 1) J}
\partialX^{\top} f \Corrmatrix  \partialX f
\\ \nonumber
&=&
\frac{1}{2} \frac{\gamma}{\delta} \frac{J}{f^2}
\partialX^{\top} f  \Corrmatrix  \partialX f
\\ \nonumber
&=&
-\frac{1}{2} \frac{1}{\delta} \left(\frac{1}{\delta} - 1 \right) \frac{J}{f^2}
\partialX^{\top} f  \Corrmatrix  \partialX f
\\ \nonumber
- \frac{1}{2} \frac{J_{w}^2}{ J_{ww}} &=&
- \frac{1}{2} \frac{\gamma^2}{w^2} J^2 \frac{w^2}{\gamma (\gamma - 1) J}
\\ \nonumber
&=&
\frac{1}{2} \frac{\gamma}{1-\gamma} J
\\ \nonumber
&=& \frac{1}{2} \frac{1}{\delta} \delta (\delta - 1) J
\\ \nonumber
\frac{1}{2}\frac{J_{w}}{J_{ww}}
\left[
\left(\kappamat \Xvec \right)^{\top} \partialX J_w
+ \partialX^{\top} J_w  \left(\kappamat \Xvec \right)\right]
&=&
\frac{1}{2} \frac{\gamma J}{w} \frac{w^2}{\gamma (\gamma - 1) J}
\Bigg[
\left(\kappamat \Xvec \right)^{\top} \left(\frac{\gamma}{w}  \frac{1}{\delta} \frac{J }{f}  \partialX f \right)
\\ \nonumber
&+& \left( \frac{\gamma}{w}  \frac{1}{\delta} \frac{J }{f}  \partialX f\right)^{\top} \left(\kappamat \Xvec \right)\Bigg]
\\ \nonumber
&=&
\frac{1}{2} \frac{1}{\delta} \frac{\gamma}{ \gamma - 1} \frac{J}{f}
\left[\Xvec^{\top} \kappamat \partialX f + \partialX^{\top} f \kappamat \Xvec \right]
\\ \nonumber
&=&
\frac{1- \delta}{2} \frac{1}{\delta} \frac{J}{f}
\left[\Xvec^{\top} \kappamat \partialX f + \partialX^{\top} f \kappamat \Xvec \right]
\\ \nonumber
&=&
-\frac{\delta - 1}{2} \frac{1}{\delta} \frac{J}{f}
\left[\Xvec^{\top} \kappamat \partialX f + \partialX^{\top} f \kappamat \Xvec \right]
\end{eqnarray}
This yields the following linear equation for the function $f$:
\begin{eqnarray}
\nonumber
\label{eq:PDE_f}
\frac{1}{2} \partialX \Corrmatrix \partialX f
- \frac{\delta + 1}{2} \Xvec^{\top}  \kappamat  \partialX f
-\frac{\delta - 1}{2} \partialX^{\top} f \left(\kappamat \Xvec \right) +
\frac{1}{2}\delta(\delta - 1) \left(\kappamat \Xvec \right)^{\top}
\Corrmatrix^{-1} \left(\kappamat \Xvec \right) f + \frac{\partial f}{\partial t} &=& 0.
\end{eqnarray}
or
\begin{eqnarray}
\nonumber
\label{eq:PDE_f_siml}
\frac{1}{2} \partialX \Corrmatrix  \partialX f
- \frac{\delta + 1}{2} \Xvec^{\top} \kappamat \partialX f
-\frac{\delta - 1}{2} \partialX^{\top} f \kappamat \Xvec +
\frac{\delta(\delta - 1)}{2} \Xvec ^{\top} \kappamat
\Corrmatrix^{-1} \kappamat \Xvec  f + \frac{\partial f}{\partial t} &=& 0.
\end{eqnarray}

The optimal control $\alphab^*$ reads:
\begin{equation}
\alphab^*(w,\Xvec, t) = w \left[ - \delta \left( \sigmamat\Corrmatrix \sigmamat \right)^{-1} \kappamat \Xvec  + \frac{\partialX f}{f}\right].
\end{equation}


\section{Wealth SDE solution}
\label{app:SDE_sol}
The wealth process corresponding to the optimal control takes the following form :
\begin{eqnarray}
\nonumber
dW_t = -W_t \Xt^{\top} \Dmatrix^{\top} d\Xt.
\end{eqnarray}
We represent the process $W_t$ in the stochastic exponent form:
\[
W_t = W_0 e^{\lambdamat^{\top} \Yt}, \quad d\Yt = u dt + \etamat d\Xt.
\]
and apply It\^o's lemma :
\begin{eqnarray}
\nonumber
dW_t = W_t \left[ \lambdamat^{\top} d\Yt + \frac{1}{2} \lambdamat^{\top} d\Yt d\Yt^{\top} \lambdamat\right].
\end{eqnarray}
Let us note that
\begin{eqnarray}
\nonumber
\lambdamat^{\top} u  &=& -\frac{1}{2} \lambdamat^{\top} \etamat \Corrmatrix \etamat^{\top} \lambdamat
\\ \nonumber
\lambdamat^{\top} \etamat &=& -\Xt^{\top}\Dmatrix^{\top}
\\ \nonumber
\etamat^{\top} \lambdamat  &=& -\Dmatrix \Xt
\\ \nonumber
\lambdamat^{\top} u &=& -\frac{1}{2} \Xt^{\top} \Dmatrix^{\top} \Corrmatrix \Dmatrix \Xt
\\ \nonumber
\lambdamat^{\top} d\Yt &=& \lambdamat^{\top} u dt + \lambdamat \etamat d\Yt.
\\ \nonumber
\lambdamat^{\top} d\Yt &=& -\frac{1}{2} \Xt^{\top} \Dmatrix^{\top} \Corrmatrix \Dmatrix \Xt dt  -\Xt^{\top}\Dmatrix^{\top} d\Xt
\end{eqnarray}
Therefore
\begin{eqnarray}
\nonumber
\int_0^t \lambdamat^{\top} d\Y_s &=& -\frac{1}{2} \int_0^t \X_s^{\top} \Dmatrix(T-s)^{\top} \Corrmatrix \Dmatrix(T-s) \X_s ds
-
\int_0^t \X_s^{\top}\Dmatrix(T-s)^{\top} d\X_s
\end{eqnarray}
Using that the matrix $\Dmatrix$  solves the following Riccati ODE:
\[
-\frac{d\Dmatrix}{dt} = \Dmatrix^{\top} \Corrmatrix \Dmatrix - \delta \kappamat \Corrmatrix^{-1} \kappamat
\]
we get
\begin{eqnarray}
\nonumber
W_t &=& W_0
\exp\left\{-\frac{\delta}{2} \int_0^t \X_s^{\top} \kappamat \Corrmatrix^{-1} \kappamat \X_s ds -\frac{1}{2} \left[\X_t^{\top} \Dmatrix(T-t) \X_t - \X_0^{\top} \Dmatrix(T) \X_0\right]\right\}
\\ \nonumber
&\cdot&
\exp\left\{ \frac{1}{2} \int_{0}^t \textbf{Tr}\Corrmatrix \Dmatrix(T-s) ds+
\frac{1}{2} \int_{0}^t \X_s^{\top} \left[ \Dmatrix - \Dmatrix^{\top}\right] d\X_s
\right\}
\end{eqnarray}

\section{Proof of Theorem \ref{thm:minima_corr}}
\label{app:TH1_proof}
The proof of Theorem \ref{thm:minima_corr} is equivalent to proof of the following 4 facts about matrix $\Fmatrix$:
\label{lem:D_properties}
\begin{eqnarray}
\label{lem:dDdp_0}
\limrhoI \left(\frac{\partial \Fmatrix}{\partial \rho_\corridx}\right)_{ij} & =& 0, \quad (ij) \notin \corridx.
\\ \label{lem:dDdp_1}
\limrhoI \textbf{Tr} \frac{\partial \Fmatrix}{\partial \rho_\corridx} & = & 0
\\ \label{lem:dDdp_2}
\limrhoI \textbf{Tr} \frac{\partial^2 \Fmatrix}{\partial \rhos \partial \rhok} &\equiv& 0.
\\ \label{lem:dDdp_3}
\limrhoI \textbf{Tr} \frac{\partial^2 \Fmatrix}{\partial \rhos ^2} &>& 0, \quad \gamma > 0,
\quad \kappa_i \neq \kappa_j.
\\ \nonumber
\limrhoI \textbf{Tr} \frac{\partial^2 \Fmatrix}{\partial \rhos ^2} &<& 0, \quad \gamma < 0,
\quad  \kappa_i \neq \kappa_j,
\\ \nonumber
\limrhoI \textbf{Tr} \frac{\partial^2 \Fmatrix}{\partial \rhos ^2} &\equiv& 0, \quad \gamma = 0 \quad or \quad \kappa_i = \kappa_j
\end{eqnarray}
\subsection{Proof of formulas (\ref{lem:dDdp_0}) and (\ref{lem:dDdp_1})}
Consider the partial derivative of $\Fmatrix$ with respect to the any correlation $\rhos$:
\begin{eqnarray}
\nonumber
\left( \frac{\partial \Fmatrix}{\partial \rhos}\right)' &=&
\frac{\partial }{\partial \rhos} \left(
2\Fmatrix  \Fmatrix - \delta \left(\kappamat \Fmatrix + \Fmatrix \Gammamat \right)
+ \frac{\delta(\delta - 1)}{2} \kappamat \Gammamat \right)
\\ \nonumber
&=&
2 \left( \frac{\partial \Fmatrix}{\partial \rhos} \Fmatrix + \Fmatrix \frac{\partial \Fmatrix}{\partial \rhos} \right)
- \delta \left(\kappamat \frac{\partial \Fmatrix}{\partial \rhos}
+ \frac{\partial \Fmatrix}{\partial \rhos} \Gammamat + \Fmatrix \frac{\partial \Gammamat}{\partial \rhos} \right)
\\ \nonumber
&+& \frac{\delta (\delta - 1)}{2} \kappamat \frac{\partial \Gammamat}{\partial \rhos}
\end{eqnarray}
Tending $\Corrmatrix$ to $\Imatrix$ we get:
\begin{eqnarray}
\nonumber
\lambdamat' &=& 2\left(\lambdamat \Psimat + \Psimat \lambdamat \right)
- \delta \left( \kappamat \lambdamat + \lambdamat \kappamat  + \Psimat \overbrace{ \left[ \kappamat \Is - \Is \kappamat\right]}^{under \,\,  Lemma \ref{lem:corr_properties},\,\ref{lem:corr_facts2}} \right)
\\ \nonumber
&+& \frac{\delta(\delta - 1)}{2} \kappamat
\underbrace{ \left[ \kappamat \Is - \Is \kappamat \right]}_{under \,\,  Lemma \ref{lem:corr_properties},\,\ref{lem:corr_facts2}}
\\ \nonumber
\lambdamat'_{ij} &=& 2\lambdamat_{ij}
\left(\Psimat_{ii} + \Psimat_{jj} -\delta \left[ \kappa_i + \kappa_j \right] \right)
-\delta
\sums \sumk \left( \Psimat_{is}\kappamat_{sk} \Is_{kj} - \Psimat_{is}\Is_{sk} \kappamat_{kj} \right)
\\ \nonumber
&+&
\frac{\delta(\delta - 1)}{2} \sums \sumk \left[ \kappamat_{is} \kappamat_{sk} \Is_{kj}
-\kappamat_{is} \Is_{sk} \kappamat_{kj}.
\right]
\\ \nonumber
\lambdamat'_{ij} &=& \lambdamat_{ij}
\left(2\Psimat_{ii} + 2\Psimat_{jj} -\delta \left[ \kappa_i + \kappa_j \right] \right)
-\delta \Psimat_{ii} \Is_{ij} \left[ \kappa_{i} -\kappa_{j} \right]
\\ \nonumber
&+&
\frac{\delta(\delta - 1)}{2}
\kappa_i \Is_{ij} \left[ \kappa_{i} - \kappa_{j}\right].
\\ \nonumber
\lambdamat'_{ij} &=& \lambdamat_{ij}
\left(2\Psimat_{ii} + 2\Psimat_{jj} -\delta \left[ \kappa_i + \kappa_j \right] \right)
-\delta \Is_{ij} \left[ \kappa_{i} -\kappa_{j} \right] \left[ \Psimat_{ii} + \frac{1 - \delta}{2}\kappa_i \right].
\\ \nonumber
\lambdamat_{ij}(0) &=& 0.
\end{eqnarray}
Since $\Is_{ij} = 0$ for $(ij) \notin \corridx$, hence $\lambdamat_{ij} \equiv 0$. Moreover, for diagonal elements $(ii) \notin \corridx$, $\forall i = 1..n$, therefore $\textbf{Tr}\lambdamat \equiv 0$.

\subsection{Proof of formula (\ref{lem:dDdp_2})}
\begin{eqnarray}
\left( \frac{\partial^2 \Fmatrix}{\partial \rhos \partial \rhok}\right)' &=&
\frac{\partial }{\partial \rhos \partial \rhok} \left(
2\Fmatrix  \Fmatrix - \delta \left(\kappamat \Fmatrix + \Fmatrix \Gammamat \right)
+ \frac{\delta(\delta - 1)}{2} \kappamat \Gammamat \right)
\\ \nonumber
&=&
2 \left( \frac{\partial^2 \Fmatrix}{\partial \rhos \partial \rhok} \Fmatrix+
\frac{\partial \Fmatrix}{\partial \rhos} \frac{\partial \Fmatrix}{\partial \rhok} +
\frac{\partial \Fmatrix}{\partial \rhok} \frac{\partial \Fmatrix}{\partial \rhos}
+ \Fmatrix \frac{\partial^2 \Fmatrix}{\partial \rhos \partial \rhok} \right)
\\ \nonumber
&-& \delta \Bigg(\kappamat \frac{\partial^2 \Fmatrix}{\partial \rhos \partial \rhok}
+ \frac{\partial^2 \Fmatrix}{\partial \rhos \partial \rhok} \Gammamat
+ \frac{\partial \Fmatrix}{\partial \rhos} \frac{\partial \Gammamat}{\partial \rhok}+
\frac{\partial \Fmatrix}{\partial \rhok} \frac{\partial \Gammamat}{\partial \rhos}
\\ \nonumber
&+& \Fmatrix \frac{\partial^2 \Gammamat}{\partial \rhos \partial \rhok}
\Bigg)
+ \frac{\delta (\delta - 1)}{2} \kappamat \frac{\partial^2 \Gammamat}{\partial \rhos \partial \rhok}
\end{eqnarray}
Let us define
\[
\etamat = \limrhoI \frac{\partial^2  \Fmatrix}{\partial \rhos \partial \rhok}, \quad
\tilde{\lambdamat} = \limrhoI \frac{\partial \Fmatrix}{\partial \rhok}
\]
therefore
\begin{eqnarray}
\nonumber
\etamat' &=& 2\left[ \etamat \Psimat + \Psimat \etamat \right]
- \delta \left[ \kappamat \etamat + \etamat \kappamat +
\lambdamat \left( \kappamat \Ik - \Ik \kappamat \right)
+ \tilde{\lambdamat} \left( \kappamat \Is - \Is \kappamat \right) +\Psimat \Qmatrix \right]
\\ \nonumber
&+&
\frac{\delta (\delta - 1)}{2} \kappamat \Qmatrix
\\ \nonumber
\etamat'_{ii} &=& 4 \etamat_{ii} \Psimat_{ii} - 2\delta \kappa_i \etamat_{ii} - \delta \Psimat_{ii} \Qmatrix_{ii}
\\ \nonumber
&-& \delta \sums\sumk
\left[
\lambdamat_{is} \kappamat_{sk} \Ik_{ki} -
\lambdamat_{is} \Ik_{sk} \kappamat_{ki} +
\tilde{\lambdamat}_{is} \kappamat_{sk} \Is_{ki} -
\tilde{\lambdamat}_{is} \Is_{sk} \kappamat_{ki}
\right]
+ \frac{\delta(\delta - 1)}{2} \kappa_i \Qmatrix_{ii}
\\ \nonumber
\etamat'_{ii} &=& 2 \etamat_{ii}\ \left[ 2 \Psimat_{ii}- \delta \kappamat_{ii} \right]
- \delta \sums \left[
\lambdamat_{is} \kappa_s \Ik_{si} -
\lambdamat_{is} \Ik_{si} \kappa_{i} +
\tilde{\lambdamat}_{is} \kappa_{s} \Is_{si} -
\tilde{\lambdamat}_{is} \Is_{si} \kappamat_{i}
\right]
\\ \nonumber
\etamat'_{ii} &=& 2 \etamat_{ii}\ \left[ 2 \Psimat_{ii}- \delta \kappamat_{ii} \right], \quad \etamat_{ii }(0) = 0.
\\ \nonumber
\etamat_{ii} &\equiv& 0.
\\ \nonumber
\textbf{Tr} \etamat &\equiv& 0.
\end{eqnarray}
\subsection{Proof of formulas (\ref{lem:dDdp_3})}
According to the definition of $\varphimat$ we obtain the following ODE:
\begin{eqnarray}
\nonumber
\varphimat' &=& 2\left[ \varphimat \Psimat + \Psimat \varphimat \right]
- \delta \left[ \kappamat \varphimat + \varphimat \kappamat + 2\lambdamat\left(\kappamat \Is - \Is \kappamat\right) +\Psimat \Pmatrix \right] +
\frac{\delta (\delta - 1)}{2} \kappamat \Pmatrix
\\ \nonumber
\varphimat(0) &=& \textbf{0}
\end{eqnarray}
or in the element wise notation:
\begin{eqnarray}
\nonumber
\varphimat'_{ii} &=& 2\varphimat_{ii} \left[2\Psimat_{ii} -\delta \kappamat_{ii}\right]  -2 \delta \lambdamat_{ij} \Is_{ij} (\kappa_j -\kappa_i) - \delta \Psimat_{ii} \Pmatrix_{ii}  + \frac{\delta (\delta -1)}{2} \kappa_i \Pmatrix_{ii}
\\ \nonumber
\varphimat'_{ii} &=& \varphimat_{ii} \left[4\Psimat_{ii} -2\delta \kappamat_{ii}\right] + 2 \delta \lambdamat_{ij} \Is_{ij} (\kappa_i -\kappa_j) - \delta \Pmatrix_{ii}\left( \Psimat_{ii}  + \frac{1- \delta}{2} \kappa_i \right)
\\ \nonumber
\varphimat'_{ii} &=& \varphimat_{ii} \left[ 4\Psimat_{ii} - 2\delta \kappamat_{ii}\right] + 2 \delta\Is_{ij} (\kappa_i -\kappa_j)
\left[\lambdamat_{ij} -
\kappa_i \frac{ (1-\sqrt{\delta})}{2}
\frac{e^{\expargi} + 1}{e^{\expargi} + \omega} \right]
\\ \nonumber
\varphimat(0) &=& 0.
\end{eqnarray}
It is easy to check that under the condition $\kappa_i = \kappa_j$:
\begin{equation}
\label{appD}
\varphimat_{ii} = \varphimat_{jj} = 0.
\end{equation}
The formula \ref{appD} also holds for the special case $\gamma = 0 (\delta = 1)$. Indeed, for this case $\lambdamat_{ij} = \lambdamat_{ji} = 0$. It turns out to that the RHS of the last equation for $\varphimat_{ij}$ is equal to zero, therefore $\varphimat_{ii} = \varphimat_{jj} = 0$.

We proceed with the case $i \notin \corridx$. Each element $\Pmatrix_{ii}$ equals $0$, i.e. $\varphimat_{ii}(\tau) \equiv 0$. Therefore, the trace of the matrix $\varphimat$ contains only two non-zero terms with multi-index $\corridx$. For simplicity of notation, we denote it as $i$ and $j$, i.e $\corridx = (ij)$. The summands $\varphimat_{ii}$ and $\varphimat_{jj}$ can be found via the following ODEs:
\begin{eqnarray}
\nonumber
\varphimat_{ii}' - \varphimat_{ii} \left[4\Psimat_{ii} -2\delta \kappa_i\right] &=&
2 \delta (\kappa_i -\kappa_j)
\left[\lambdamat_{ij} -
\kappa_i \frac{ (1-\sqrt{\delta})}{2}
\frac{e^{\expargi} + 1}{e^{\expargi} + \omega} \right].
\\ \nonumber
\varphimat_{jj}' - \varphimat_{jj} \left[4\Psimat_{jj} -2\delta \kappa_j\right] &=&
2 \delta (\kappa_j -\kappa_i)
\left[\lambdamat_{ji} -
\kappa_j \frac{ (1-\sqrt{\delta})}{2}
\frac{e^{\expargj} + 1}{e^{\expargj} + \omega} \right]
\\ \nonumber
\varphimat_{ii}(0) = \varphimat_{jj} (0) &=& 0.
\end{eqnarray}
Using Lemma \ref{lem:Phi_signs} we finish the proof.

\section{Auxiliary facts about the structure of the matrix $\Fmatrix$ in the zero correlation case}
Here we present some facts about the structure of $\Fmatrix$ for the zero correlation case. We consider the matrices $\Psimat$, $\lambdamat$ and $\varphimat$ defined as follows:
\begin{equation}
\label{eq:Psi_mat_def}
\Psimat = \limrhoI \Fmatrix
,\quad \lambdamat = \limrhoI \frac{\partial \Fmatrix}{\partial \rhos},
\quad \varphimat = \limrhoI \frac{\partial^2 \Fmatrix}{\partial \rhos^2}
\end{equation}
\begin{lemma}
	\label{lem:Psi_properties}
The matrix $\Psimat$ is a diagonal matrix with the following entries:
\begin{equation}
\nonumber
	\Psimat = diag
\left( \Psi(\kappa_1,\tau),
	   \Psi(\kappa_1,\tau),
	   \hdots,
	   \Psi(\kappa_n, \tau)\right)
\end{equation}
Here the function  $\Psi(\kappa, \tau)$ can be defined as a solution to the following one-dimensional Riccati equation
\begin{equation}
	\frac{d \Psi}{d\tau} = 2\Psi^2 -2\delta \kappa \Psi + \frac{\delta (\delta -1)} {2} \kappa^2, \quad \Psi(0) =0.
\end{equation}
which can be solved explicitly:
	\begin{equation}
	\label{eq:Psi_omega_def}
	\Psi(\kappa, \tau) = \frac{ \kappa \sqrt{\delta}(\sqrt{\delta} - 1)}{2} \frac{ e^{\exparg} - 1}{e^{\exparg} +\omega}, \quad \omega = \frac{1-\sqrt{\delta}}{1 + \sqrt{\delta}}.
	\end{equation}
Moreover,  the function $\Psi$ has the following properties:
	\begin{equation}
	\label{eq:int_Psi_def}
	\int \Psi(\kappa, \tau) d\tau = \frac{\delta + \sqrt{\delta}}{2} \kappa \tau - \frac{1}{2} \ln \left(e^{\exparg} + \omega \right) + C.
	\end{equation}
	\begin{equation}
	\label{eq:Psi_property}
	\Psi(\kappa, \tau) + \frac{1-\delta}{2}\kappa = \frac{\kappa(1-\sqrt{\delta)}}{2} \frac{e^{\exparg} + 1}{e^{\exparg} + \omega}
	\end{equation}
\end{lemma}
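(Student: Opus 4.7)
The plan is to derive the lemma by combining the $\Corrmatrix \to \Imatrix$ limit of the Riccati equation for $\Fmatrix$ (stated in the proof of Theorem \ref{thm:minima_corr}) with the observation that this limit preserves diagonality, then solving a scalar Riccati ODE and doing bookkeeping algebra.

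First, I would substitute $\Corrmatrix = \Imatrix$ into the matrix ODE $\Fmatrix' = 2\Fmatrix^2 - \delta(\kappamat\Fmatrix + \Fmatrix\Gammamat) + \frac{\delta(\delta-1)}{2}\kappamat\Gammamat$. Since $\Gammamat = \Corrmatrix^{-1}\kappamat\Corrmatrix$, we have $\Gammamat \to \kappamat$, so $\Psimat$ satisfies $\Psimat' = 2\Psimat^2 - \delta(\kappamat\Psimat + \Psimat\kappamat) + \frac{\delta(\delta-1)}{2}\kappamat^2$ with $\Psimat(0) = \mathbf{0}$. Because $\kappamat$ is diagonal, the right-hand side maps diagonal matrices to diagonal matrices, and the initial value is diagonal; by uniqueness of solutions, $\Psimat(\tau)$ stays diagonal for all $\tau$. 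Each diagonal entry $\Psi(\kappa_i,\tau)$ then solves the scalar Riccati equation stated in the lemma.

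Next, I solve the scalar equation $\Psi' = 2\Psi^2 - 2\delta\kappa\Psi + \frac{\delta(\delta-1)}{2}\kappa^2$ by locating the two stationary roots of the right-hand side, namely $y_\pm = \frac{\kappa\sqrt{\delta}(\sqrt{\delta}\pm 1)}{2}$. The standard separation of variables gives $\frac{\Psi - y_+}{\Psi - y_-} = \frac{y_+}{y_-}\,e^{2(y_+ - y_-)\tau}$ after imposing $\Psi(0)=0$. Using $y_+ - y_- = \kappa\sqrt{\delta}$ and the identity $y_+\omega = -y_-$ (which follows from $\omega = \frac{1-\sqrt{\delta}}{1+\sqrt{\delta}}$), I invert the fractional-linear relation and collect to obtain the claimed $\Psi(\kappa,\tau) = y_-\frac{E-1}{E+\omega}$, where $E = e^{\exparg}$.

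For the integral formula I split $\frac{E-1}{E+\omega} = 1 - \frac{1+\omega}{E+\omega}$ and compute $\int \frac{d\tau}{E+\omega}$ via the substitution $u = E$ and partial fractions $\frac{1}{u(u+\omega)} = \frac{1}{\omega}\bigl(\frac{1}{u} - \frac{1}{u+\omega}\bigr)$, which produces $\frac{1}{2\kappa\sqrt{\delta}\,\omega}\ln\frac{E}{E+\omega}$; combining with $y_-(1+\omega) = -\kappa\sqrt{\delta}\,\omega$ cleanly gives $\int \Psi\,d\tau = y_-\tau + \tfrac12 \ln E - \tfrac12\ln(E+\omega)+C$, which simplifies to the stated $\frac{\delta+\sqrt{\delta}}{2}\kappa\tau - \tfrac12\ln(E+\omega)+C$. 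Finally, for property (\ref{eq:Psi_property}), I place $\Psi + \frac{1-\delta}{2}\kappa$ over the common denominator $E+\omega$ and expand the numerator; the key cancellation uses $1-\delta = (1-\sqrt{\delta})(1+\sqrt{\delta})$ so that $(1-\delta)\omega = (1-\sqrt{\delta})^2$, whereupon the numerator collapses to $(1-\sqrt{\delta})(E+1)$, yielding the claimed expression.

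No step is conceptually difficult; the only real risk is sign-bookkeeping around $\omega$, $\sqrt{\delta}$, and the two roots $y_\pm$. The identity $y_+\omega = -y_-$ is the single most important algebraic observation, as it is responsible for both the simple form of $\Psi$ and the collapse in (\ref{eq:Psi_property}), so I would verify it first and use it as a reusable shortcut throughout.
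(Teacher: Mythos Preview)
Your proposal is correct and in fact more complete than the paper's own proof. The paper only carries out the separation-of-variables solution of the scalar Riccati ODE (essentially the same partial-fraction integration you do, just without explicitly naming the roots $y_\pm$), and stops once it has the closed form for $\Psi$; it does not write out the diagonality argument for $\Psimat$, the antiderivative (\ref{eq:int_Psi_def}), or the identity (\ref{eq:Psi_property}). Your additions fill those gaps cleanly, and your organizing identity $y_+\omega=-y_-$ is exactly the right lever to make the algebra for all three remaining claims short.
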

\begin{proof}
	\begin{eqnarray}
	\nonumber
	\frac{d\Psi}{d\tau} &=& 2 \Psi^2- 2 \delta \kappa \Psi +  \frac{\delta( \delta - 1)}{2} \kappa^2,
	\\ \nonumber
	d\tau &=&  \frac{d \Psi}{2 \Psi^2 - 2 \delta \kappa \Psi + \delta( \delta - 1) \kappa^2 /2 }
	\\ \nonumber
	\int d\tau &=&  \int \frac{d \Psi}{2 \Psi^2 - 2 \delta \kappa \Psi + \delta( \delta - 1) \kappa^2 / 2}
	\\ \nonumber
	\tau + c &=& \frac{1}{2\sqrt{\delta} \kappa}
	\left[
	\ln\left( \frac{\delta \kappa - 2\Psi}
	{\sqrt{\delta} \kappa} + 1\right)
	-
	\ln\left(1- \frac{\delta \kappa - 2\Psi}
	{\sqrt{\delta} \kappa} \right)
	\right]
	\\ \nonumber
	\tau + c &=& \frac{1}{2\sqrt{\delta} \kappa}
	\ln\left(
	\frac{\delta \kappa - 2\Psi + \sqrt{\delta} \kappa}
	{-\delta \kappa + 2\Psi + \sqrt{\delta} \kappa}\right)
	\\ \nonumber
	2\sqrt{\delta} \kappa \tau
	+ \ln\left(
	\frac{\delta \kappa + \sqrt{\delta} \kappa}
	{-\delta \kappa + \sqrt{\delta} \kappa}\right) &=&
	\ln\left(
	\frac{\delta \kappa - 2\Psi + \sqrt{\delta} \kappa}
	{-\delta \kappa + 2\Psi + \sqrt{\delta} \kappa}\right)
	\\ \nonumber
	2\sqrt{\delta} \kappa \tau &=&
	\ln\left(
	\frac{\delta \kappa - 2\Psi + \sqrt{\delta} \kappa}
	{-\delta \kappa + 2\Psi + \sqrt{\delta} \kappa}\right)
	- \ln\left(
	\frac{ 1 + \sqrt{\delta}}
	{1 - \sqrt{\delta}}\right)
	\\ \nonumber
	e^{2\sqrt{\delta} \kappa \tau} &=&
	\frac{\left(\delta \kappa - 2\Psi + \sqrt{\delta} \kappa\right)(1-\sqrt{\delta})}
	{\left( -\delta \kappa + 2\Psi + \sqrt{\delta} \kappa\right) (1 + \sqrt{\delta})}
	\\ \nonumber
	e^{2\sqrt{\delta} \kappa \tau} &=&
	\frac{2\Psi (\sqrt{\delta} - 1) + \sqrt{\delta}\kappa (1- \delta)}
	{2\Psi (\sqrt{\delta} + 1) + \sqrt{\delta}\kappa (1- \delta)}
	\end{eqnarray}
	Hence $\Psi$ equals to
	\begin{eqnarray}
	\nonumber
	\Psi  &=&\frac{1}{2} \frac{\sqrt{\delta}\kappa (1-\delta)\left(1 - e^{2\sqrt{\delta} \kappa \tau} \right)}
	{e^{2\sqrt{\delta} \kappa \tau}\left(1 + \sqrt{\delta} \right) + 1- \sqrt{\delta}}
	\\ 	\nonumber
	\Psi &=& -\frac{\sqrt{\delta} \kappa}{2} \frac{(1-\sqrt{\delta})\left(1 - e^{-2\sqrt{\delta} \kappa \tau} \right)}
	{1 + \frac{1- \sqrt{\delta}}{1+ \sqrt{\delta}} e^{-2\sqrt{\delta} \kappa \tau}}
	\\ 	\nonumber
	\Psi &=& \frac{\kappa \sqrt{\delta} (\sqrt{\delta} - 1)}{2} \frac{e^{\exparg} - 1} {e^{\exparg} + \omega}	
	, \quad \omega = \frac{1- \sqrt{\delta}}{1+ \sqrt{\delta}}
	\end{eqnarray}
\end{proof}

\begin{lemma}	
	Each element $\lambdamat_{ij}$ of the matrix $\lambdamat$ is the following function:
	\begin{eqnarray}
	 \nonumber
	\lambdamat_{ij} &=& \kappa_{i} \frac{\sqrt{\delta}  (1 - \sqrt{\delta})}{2(e^{\expargi} + \omega)(e^{\expargj} + \omega)} \times
	\\
	&\times& \Bigg[
	\frac{\kappa_j - \kappa_i}{\kappa_j + \kappa_i} \left(e^{(\kappa_j + \kappa_i) \sqrt{\delta} \tau} -1\right)\left(e^{(\kappa_j + \kappa_i) \sqrt{\delta} \tau} +\omega\right)
	\\ \nonumber
	&+&
	e^{\expargi}\left(e^{(\kappa_j- \kappa_i) \sqrt{\delta} \tau} -1\right)\left(e^{(\kappa_{j} - \kappa_i) \sqrt{\delta} \tau} +\omega\right)
	\Bigg]
	\end{eqnarray}
	\begin{proof}
	Differentiating the matrix equation \ref{eq:F_def} with respect to time $t$ and taking the limit $\Corrmatrix \rightarrow \Imatrix$, we get the following element wise ODEs for the $\lambdamat_{ij}$:
		\begin{eqnarray}
		\nonumber
		\lambdamat_{ij}' &=& \lambdamat_{ij}\left(2\Psimat_{ii} + 2\Psimat_{jj} -\delta \left[\kappa_{i} + \kappa_j \right] \right) -
		\delta \left[ \kappa_i - \kappa_j\right] \left[ \Psimat_{ii} + \frac{1-\delta}{2} \kappa_i\right]
		\\ \nonumber
		\lambdamat_{ij}(0) &=&0.
		\end{eqnarray}
The corresponding homogeneous ODE can be solved explicitly:
		\[
		\frac{e^{\kappa_i \sqrt{\delta} \tau + \kappa_j \sqrt{\delta} \tau}}{(e^{\expargi} + \omega)(e^{\expargj} + \omega)}.
		\]
		Thus, the solution of non-homogeneous problem reads
		\begin{eqnarray}
		\nonumber
		\lambdamat_{ij}
		&=&
		- \delta \left[\kappa_i - \kappa_j \right] \frac{\kappa_i (1-\sqrt{\delta})}{2} \frac{e^{\kappa_i \sqrt{\delta} \tau + \kappa_j \sqrt{\delta} \tau}}{(e^{\expargi} + \omega)(e^{\expargj} + \omega)}
		\\ \nonumber
		&\times& \int_0^\tau \frac{(e^{2\kappa_i \sqrt{\delta} \zeta} + 1)(e^{2\kappa_j \sqrt{\delta} \zeta} + \omega)}{e^{\kappa_i \sqrt{\delta} \zeta + \kappa_j \sqrt{\delta} \zeta}} d\zeta
		\\ \nonumber
		&=& - \delta \left[\kappa_i - \kappa_j \right] \frac{\kappa_i (1-\sqrt{\delta})}{2} \frac{e^{\kappa_i \sqrt{\delta} \tau + \kappa_j \sqrt{\delta} \tau}}{(e^{\expargi} + \omega)(e^{\expargj} + \omega)}
		\\ \nonumber
		&\times&
		\int_0^\tau \left[
		e^{(\kappa_i +\kappa_j)\sqrt{\delta} \zeta}
		+ \omega e^{(\kappa_i  - \kappa_j)\sqrt{\delta} \zeta}
		+e^{(\kappa_j - \kappa_i)\sqrt{\delta} \zeta}
		+ \omega e^{-(\kappa_i +\kappa_j)\sqrt{\delta} \zeta}
		\right]d\zeta;
		\\ \nonumber
		&=& \delta \left[\kappa_j - \kappa_i \right] \frac{\kappa_i (1-\sqrt{\delta})}{2} \frac{e^{(\kappa_i + \kappa_j) \sqrt{\delta} \tau }}{(e^{\expargi} + \omega)(e^{\expargj} + \omega)}
		\\ \nonumber
		&\times&
		\left[ \frac{e^{(\kappa_i +\kappa_j)\sqrt{\delta} \tau} - \omega e^{-(\kappa_i +\kappa_j)\sqrt{\delta} \tau} + \omega - 1}{(\kappa_i +\kappa_j)\sqrt{\delta} } +
		\frac{e^{(\kappa_j  - \kappa_i)\sqrt{\delta} \tau} - \omega e^{-(\kappa_j-\kappa_i)\sqrt{\delta} \tau} + \omega - 1 }{(\kappa_j -\kappa_i)\sqrt{\delta} }
		\right]
		\\ \nonumber
		&=& \kappa_{i} \frac{\sqrt{\delta}  (1 - \sqrt{\delta})}{2(e^{\expargi} + \omega)(e^{\expargj} + \omega)} \times
		\\ \nonumber
		&\times& \Bigg[
		\frac{\kappa_j - \kappa_i}{\kappa_j + \kappa_i} \left(e^{(\kappa_j + \kappa_i) \sqrt{\delta} \tau} -1\right)\left(e^{(\kappa_j + \kappa_i) \sqrt{\delta} \tau} +\omega\right)
		\\ \nonumber
		&+&
		e^{\expargi}\left(e^{(\kappa_j- \kappa_i) \sqrt{\delta} \tau} -1\right)\left(e^{(\kappa_{j} - \kappa_i) \sqrt{\delta} \tau} +\omega\right)
		\Bigg]
		\end{eqnarray}
	\end{proof}
\end{lemma}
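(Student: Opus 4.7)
\medskip

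\noindent\textbf{Proof proposal.} The plan is to derive a scalar linear ODE for each off-diagonal entry $\lambdamat_{ij}$ and solve it by variation of parameters. Starting from the matrix Riccati equation for $\Fmatrix$ written in the preceding lemmas, I would differentiate with respect to $\rhos$ and then send $\Corrmatrix \to \Imatrix$. Using the fact that $\limrhoI \Fmatrix = \Psimat$ is diagonal with entries $\Psi(\kappa_i, \tau)$ (Lemma on $\Psimat$) and that $\limrhoI \partial \Gammamat / \partial \rhos = \kappamat \Is - \Is \kappamat$, all cross-terms collapse and only the $(ij)$-entry survives. The result is the linear inhomogeneous ODE already exhibited in the excerpt:
\begin{eqnarray*}
\lambdamat_{ij}' &=& \lambdamat_{ij}\bigl(2\Psimat_{ii} + 2\Psimat_{jj} - \delta[\kappa_i + \kappa_j]\bigr) - \delta[\kappa_i - \kappa_j]\Bigl[\Psimat_{ii} + \tfrac{1-\delta}{2}\kappa_i\Bigr],\\
\lambdamat_{ij}(0) &=& 0.
\end{eqnarray*}

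Next I would handle the homogeneous part. Using the explicit form $\Psi(\kappa,\tau) = \tfrac{\kappa\sqrt{\delta}(\sqrt{\delta}-1)}{2} \cdot \tfrac{e^{\exparg}-1}{e^{\exparg}+\omega}$, a short computation shows that $2\Psi(\kappa_i,\tau) - \delta \kappa_i$ is, up to a factor, the logarithmic derivative of $\tfrac{e^{\kappa_i \sqrt{\delta}\tau}}{e^{\expargi}+\omega}$. Consequently the integrating factor yields the homogeneous solution
\[
\frac{e^{(\kappa_i + \kappa_j)\sqrt{\delta}\tau}}{(e^{\expargi}+\omega)(e^{\expargj}+\omega)},
\]
as flagged in the excerpt. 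This is the key compact identity that makes the rest tractable.

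For the particular solution I would apply variation of parameters. Using the identity $\Psi(\kappa_i,\tau) + \tfrac{1-\delta}{2}\kappa_i = \tfrac{\kappa_i(1-\sqrt{\delta})}{2}\cdot\tfrac{e^{\expargi}+1}{e^{\expargi}+\omega}$ from the $\Psi$-Lemma, the forcing term simplifies to a product of exponential factors, and the integrand for variation of parameters becomes $(e^{2\kappa_i\sqrt{\delta}\zeta}+1)(e^{2\kappa_j\sqrt{\delta}\zeta}+\omega)/e^{(\kappa_i+\kappa_j)\sqrt{\delta}\zeta}$. Expanding the product gives four pure exponentials in $\zeta$ that integrate termwise with exponents $\pm(\kappa_i+\kappa_j)\sqrt{\delta}$ and $\pm(\kappa_j-\kappa_i)\sqrt{\delta}$.

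Finally I would collect terms and apply the initial condition $\lambdamat_{ij}(0) = 0$ to fix the constant. Factoring two common pieces from the four exponential primitives, one piece pairs with $(\kappa_i+\kappa_j)$ in the denominator and the other with $(\kappa_j-\kappa_i)$, and each resulting factor organises itself into the product form $\bigl(e^{s\sqrt{\delta}\tau}-1\bigr)\bigl(e^{s\sqrt{\delta}\tau}+\omega\bigr)$ with $s = \kappa_j\pm\kappa_i$, which reproduces the stated formula. The main obstacle will be precisely this last algebraic reorganisation: the raw output of variation of parameters is a sum of eight exponential monomials, and recognising that they regroup into the neat product-of-two-binomials form (rather than surviving as a generic linear combination) requires careful bookkeeping of the $\omega$-terms and of the cancellation that produces the overall factor $\tfrac{\sqrt{\delta}(1-\sqrt{\delta})}{2}$.
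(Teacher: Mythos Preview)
Your proposal is correct and follows essentially the same route as the paper: derive the scalar linear ODE for $\lambdamat_{ij}$ by differentiating the $\Fmatrix$-equation in $\rhos$ and passing to $\Corrmatrix\to\Imatrix$, identify the homogeneous solution via the logarithmic-derivative structure of $\Psi$, then use variation of parameters with the $\Psi(\kappa_i,\tau)+\tfrac{1-\delta}{2}\kappa_i$ identity, expand the integrand into four exponentials, integrate termwise, and regroup. The only difference is cosmetic: the paper writes out each intermediate integral explicitly, whereas you summarise the bookkeeping---but the ideas and the order of steps are identical.
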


\begin{lemma}
	\label{lem:Phi_signs}
	Any diagonal element $\varphimat_{ii}$ of the matrix $\varphimat$ can be defined as a solution to the following ODE:	
	\begin{eqnarray}
	\varphimat_{ii}' &=& - 2\kappa_i \sqrt{\delta} \frac{e^{\expargi} -\omega}{e^{\expargi}+\omega} \varphimat_{ii} +
	\delta (1 - \sqrt{\delta}) \kappa_i (\kappa_i -\kappa_j) \times
	\\ \nonumber
	&\times&
	\bigg[ -\frac{e^{\expargi} + 1}{e^{\expargi} + \omega} +
	\sqrt{\delta}\frac{\kappa_j - \kappa_i}{\kappa_j + \kappa_i} \
	\frac{
		e^{(\kappa_j + \kappa_i) \sqrt{\delta} \tau} -1 }{e^{\expargi} + \omega}
	\frac{e^{(\kappa_j + \kappa_i) \sqrt{\delta} \tau} +\omega}{e^{\expargj} + \omega}
	\\ \nonumber
	&+&\sqrt{\delta} e^{\expargi} \frac{e^{(\kappa_j- \kappa_i) \sqrt{\delta} \tau} -1 } {e^{\expargi} + \omega}
	\frac{ e^{(\kappa_{j} - \kappa_i) \sqrt{\delta} \tau} +\omega}
	{e^{\expargj} + \omega}
	\bigg]
	\\ \nonumber
	\varphimat_{ii}(0) &=&0
	\end{eqnarray}
	Moreover, the following inequalities holds 	for any $\kappa_i>0$, $\kappa_j >0$, $T>0$ and $\delta > 0$:
	\begin{eqnarray}
	\nonumber
	\int_0^{T} \left[{\varphimat_{ii}(u) + \varphimat_{jj}(u)} \right] du &>&0, \quad \delta >1, \quad \kappa_i \neq \kappa_j
	\\	\nonumber
	\int_0^{T} \left[{\varphimat_{ii}(u) + \varphimat_{jj}(u)} \right] du &\equiv& 0, \quad \delta =1 \quad or \quad \kappa_i = \kappa_j
	\\	\nonumber
	\int_0^{T} \left[{\varphimat_{ii}(u) + \varphimat_{jj}(u)} \right] du &<& 0, \quad 0<\delta <1 \quad \quad \kappa_i \neq \kappa_j
	\end{eqnarray}

\end{lemma}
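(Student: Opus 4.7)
The plan is to establish the ODE first and then harness its structure to prove the sign assertions.

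First I would derive the ODE. Starting from the Riccati equation
\[
\Fmatrix' = 2\Fmatrix^2 - \delta(\kappamat \Fmatrix + \Fmatrix \Gammamat) + \frac{\delta(\delta-1)}{2}\kappamat \Gammamat,
\]
I would differentiate twice with respect to $\rhos$ and take the limit $\Corrmatrix \to \Imatrix$, using $\Pmatrix := \limrhoI \partial^2 \Gammamat/\partial \rhos^2$ and $\Is := \limrhoI \partial \Gammamat/\partial \rhos$. Evaluating the diagonal entry $(ii)$ and invoking Lemma \ref{lem:Psi_properties} (which gives $\Psimat$ diagonal and explicit), together with the explicit formula for $\lambdamat_{ij}$, all contributions reduce to
\[
\varphimat_{ii}' = \varphimat_{ii}\bigl[4\Psimat_{ii}-2\delta\kappa_i\bigr] + 2\delta\,\Is_{ij}(\kappa_i-\kappa_j)\Bigl[\lambdamat_{ij}-\kappa_i\tfrac{1-\sqrt{\delta}}{2}\tfrac{e^{\expargi}+1}{e^{\expargi}+\omega}\Bigr],
\]
exactly as it appears already in Appendix~\ref{app:TH1_proof}. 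The coefficient $4\Psimat_{ii}-2\delta\kappa_i$ I would rewrite using $\Psi(\kappa_i,\tau)=\tfrac{\kappa_i\sqrt{\delta}(\sqrt{\delta}-1)}{2}\tfrac{e^{\expargi}-1}{e^{\expargi}+\omega}$, which after a short manipulation collapses to $-2\kappa_i\sqrt{\delta}\tfrac{e^{\expargi}-\omega}{e^{\expargi}+\omega}$. Substituting the explicit expression for $\lambdamat_{ij}$ in the forcing term produces the stated ODE for $\varphimat_{ii}$.

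Next I would solve the linear first-order ODE by computing the integrating factor. Using the identity
\[
\int_0^\tau \frac{e^{2\kappa_i\sqrt{\delta}s}-\omega}{e^{2\kappa_i\sqrt{\delta}s}+\omega}\,ds = \tau - \frac{1}{\kappa_i\sqrt{\delta}}\ln\frac{e^{\expargi}}{e^{\expargi}+\omega} + C,
\]
the integrating factor is $\mu_i(\tau) = (e^{\expargi}+\omega)^2/e^{\expargi}$, so that
\[
\varphimat_{ii}(\tau) = \frac{1}{\mu_i(\tau)}\int_0^{\tau} \mu_i(u)\, G_i(u)\,du,
\]
where $G_i$ is the forcing term. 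Since $\mu_i(\tau)>0$, the sign of $\varphimat_{ii}$ at any $\tau$ is driven by the sign of $G_i$, and similarly for $\varphimat_{jj}$ with $i$ and $j$ interchanged.

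The decisive step is to analyse $\varphimat_{ii}+\varphimat_{jj}$. The forcing terms $G_i$ and $G_j$ differ by a sign through the factor $\kappa_i-\kappa_j$ vs.\ $\kappa_j-\kappa_i$, but they are not negatives of each other because $\mu_i\neq\mu_j$ and $\lambdamat_{ij}\neq\lambdamat_{ji}$. I would expand the integrand of $\mu_i G_i + \mu_j G_j$ after explicitly substituting $\lambdamat_{ij}$ and $\lambdamat_{ji}$, and show that the net sum can be written as a manifestly signed expression multiplied by an overall factor of $(1-\sqrt{\delta})(\kappa_i-\kappa_j)^2$ (times a positive function of $\tau$, $\kappa_i$, $\kappa_j$). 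Because $(1-\sqrt{\delta})$ switches sign between $\delta<1$ and $\delta>1$, and $(\kappa_i-\kappa_j)^2\geq 0$ with equality iff $\kappa_i=\kappa_j$, this will yield the three cases in the lemma: positivity for $\delta>1$ with $\kappa_i\neq\kappa_j$, negativity for $0<\delta<1$ with $\kappa_i\neq\kappa_j$, and vanishing whenever $\delta=1$ (where $\omega=0$ and $1-\sqrt{\delta}=0$ both trivialise the forcing) or $\kappa_i=\kappa_j$ (where the forcing is proportional to $\kappa_i-\kappa_j$).

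The main obstacle will be the algebraic consolidation of $\mu_i G_i + \mu_j G_j$ into a squared-difference times $(1-\sqrt{\delta})$ form. The various exponential factors $e^{\expargi}$, $e^{\expargj}$, $e^{(\kappa_i+\kappa_j)\sqrt{\delta}\tau}$ and $e^{(\kappa_j-\kappa_i)\sqrt{\delta}\tau}$ entering through $\lambdamat_{ij}$ produce a tangle of cross terms, and the symmetry exploited to cancel them must be handled carefully. Once the factored form is obtained, integrating in $\tau$ preserves the sign because the remaining factor is a non-negative (in fact strictly positive when $\kappa_i\neq\kappa_j$) function of $u$, which yields the three inequalities claimed. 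This completes the proof.
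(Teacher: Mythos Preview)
The paper's own ``proof'' of this lemma is the single sentence \emph{Can be checked by the direct calculations}, so there is no argument to compare with; your plan is already far more detailed than what the paper offers, and the first half---deriving the ODE by specialising the computation from Appendix~\ref{app:TH1_proof} and rewriting $4\Psimat_{ii}-2\delta\kappa_i$ via Lemma~\ref{lem:Psi_properties}---is entirely in line with what the paper does elsewhere.

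There are, however, two genuine gaps in the sign argument as you have sketched it. First, the quantity $\mu_i G_i+\mu_j G_j$ is not the integrand that controls $\varphimat_{ii}+\varphimat_{jj}$. From your own representation
\[
\varphimat_{ii}(\tau)=\frac{1}{\mu_i(\tau)}\int_0^{\tau}\mu_i(u)G_i(u)\,du,\qquad
\varphimat_{jj}(\tau)=\frac{1}{\mu_j(\tau)}\int_0^{\tau}\mu_j(u)G_j(u)\,du,
\]
the outer prefactors $1/\mu_i(\tau)$ and $1/\mu_j(\tau)$ differ, so the sum is \emph{not} $\tfrac{1}{\mu}\int_0^\tau(\mu_i G_i+\mu_j G_j)$. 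Showing that $\mu_i G_i+\mu_j G_j$ has a definite sign therefore does not, by itself, determine the sign of $\varphimat_{ii}+\varphimat_{jj}$, let alone of its $\tau$-integral over $[0,T]$. You will need either to prove that each summand $\varphimat_{ii}$, $\varphimat_{jj}$ has the same sign (they do not in general, since the forcings carry opposite factors $(\kappa_i-\kappa_j)$ and $(\kappa_j-\kappa_i)$), or to work with a different combination---for instance, show directly that $\varphimat_{ii}(\tau)+\varphimat_{jj}(\tau)$ has a fixed sign for all $\tau>0$ by examining its ODE, or combine the two variation-of-constants formulas more carefully before cancelling.

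Second, the sign bookkeeping in your final step is off. You claim the sum factors as $(1-\sqrt{\delta})(\kappa_i-\kappa_j)^2$ times a positive function, and that this yields positivity for $\delta>1$. But $1-\sqrt{\delta}<0$ when $\delta>1$, so that factorisation would give the opposite sign to what the lemma asserts. Either the residual factor is negative, or the controlling factor is $(\sqrt{\delta}-1)$ (equivalently an even power of $(1-\sqrt{\delta})$ appears after the full cancellation); in any case the claimed factored form needs correction before the three cases follow.
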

\begin{proof}
	Can be checked by the direct calculations.
\end{proof}

\section{Auxilliary facts about correlation matrices}
\label{app:sec:corr_properties}
In this section we use two special types of square symmetric matrices, $\Is$ and $\Ixi$. These objects are defined as follows:
Matrix $\Is$ has zero entries, except elements with multiindex $(\corridx)$, these elements are equal to 1:
\begin{equation}
\label{eq:Is_def}
\Is_{ij} = 0, \forall (ij) \neq (\corridx), \quad \Is_{ij} = 1, (ij) = (\corridx), \quad or \quad (ji) = (\corridx).
\end{equation}
Matrix $\Is$ is a traceless matrix, $\textbf{Tr}\Is = 0$.
The matrix $\Ixi$ has also zero entries, except  only one element on $(u,u)$. This element is equal to 1.

We prove some useful facts about correlation matrix $\Corrmatrix$ and the similarity transform $\Gammamat = \Corrmatrix^{-1} \kappamat \Corrmatrix$ of the matrix $\kappamat$.
\begin{lemma}
	\label{lem:corr_properties}
	Correlation matrix $\Corrmatrix$ and its similarity transform $\Gammamat$ have the following properties:
	\begin{enumerate}
		\item{
			\label{lem:corr_facts1}
			\begin{equation}
			\frac{\partial \Corrmatrix^{-1}}{\partial \rhos} =
			-\Corrmatrix^{-1} \frac{\partial \Corrmatrix}{\partial \rhos} \Corrmatrix^{-1}.
			\end{equation}
		}
		\item{
			\label{lem:corr_facts2}
			\begin{equation}
			\limrhoI \frac{\partial \Gammamat}{\partial \rhos} = \kappamat \Is -\Is \kappamat.
			\end{equation}
		}
		\item{
			\label{lem:corr_facts3}
			\begin{equation}
			\limrhoI \frac{\partial^2 \Gammamat}{\partial \rhos \partial \rhok} = \Qmatrix, \quad
			\Qmatrix_{ii} = 0, \forall i = 1..n.
			\end{equation}
		}
		\item{
			\label{lem:corr_facts4}
			\begin{equation}
			\limrhoI \frac{\partial^2 \Gammamat}{\partial \rhos ^2} = \Pmatrix, \quad
			\Pmatrix_{ii} = 2 \Indicator(ij \in \corridx) \left[ \kappa_i - \kappa_j\right]
			\end{equation}
		}
	\end{enumerate}
\end{lemma}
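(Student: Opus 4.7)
The plan is to establish the four items in sequence, each feeding into the next, and to exploit the fact that $\partial \Corrmatrix / \partial \rhos$ is simply the constant matrix $\Is$: by construction $\rhos = \rho_{mn}$ enters $\Corrmatrix$ only at the symmetric positions $(m,n)$ and $(n,m)$, each with unit derivative, so in particular $\partial^2 \Corrmatrix / \partial \rhos \partial \rhok = \textbf{0}$.

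For item~\ref{lem:corr_facts1}, I would differentiate the defining identity $\Corrmatrix \Corrmatrix^{-1} = \Imatrix$ with respect to $\rhos$ and solve the resulting linear relation for $\partial \Corrmatrix^{-1} / \partial \rhos$; this is the standard derivative-of-inverse formula. For item~\ref{lem:corr_facts2}, I apply the product rule to $\Gammamat = \Corrmatrix^{-1} \kappamat \Corrmatrix$ and substitute item~\ref{lem:corr_facts1}, obtaining
\begin{equation*}
\frac{\partial \Gammamat}{\partial \rhos} = -\Corrmatrix^{-1} \frac{\partial \Corrmatrix}{\partial \rhos} \Gammamat + \Corrmatrix^{-1} \kappamat \frac{\partial \Corrmatrix}{\partial \rhos}.
\end{equation*}
Sending $\Corrmatrix \to \Imatrix$ replaces $\Gammamat$ by $\kappamat$ and collapses the right-hand side to $\kappamat \Is - \Is \kappamat$.

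For items~\ref{lem:corr_facts3} and~\ref{lem:corr_facts4}, I differentiate the expression above once more, with respect to $\rhok$ or $\rhos$ respectively. Since $\kappamat$ and $\Is$ do not depend on any correlation, only the factors $\Corrmatrix^{-1}$ and $\Gammamat$ contribute new derivatives; feeding in items~\ref{lem:corr_facts1} and~\ref{lem:corr_facts2} and then passing to the limit $\Corrmatrix \to \Imatrix$ yields
\begin{equation*}
\Qmatrix = -\Ik \kappamat \Is + \Ik \Is \kappamat - \Is \kappamat \Ik + \Is \Ik \kappamat,
\end{equation*}
with $\Pmatrix$ obtained by specializing $\Ik = \Is$. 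Using $\kappamat_{ab} = \kappa_a \delta_{ab}$ together with the symmetry of $\Is$ and $\Ik$, the diagonal entries reduce to
\begin{equation*}
\Qmatrix_{ii} = 2 \sum_{a} \Is_{ia} \Ik_{ia} (\kappa_i - \kappa_a), \qquad \Pmatrix_{ii} = 2 \sum_{a} \Is_{ia}^2 (\kappa_i - \kappa_a).
\end{equation*}

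The step requiring the most care — and the one I expect to be the main obstacle — is the combinatorial reading of these two diagonal sums, because it is where the asymmetry between items~\ref{lem:corr_facts3} and~\ref{lem:corr_facts4} has to be justified. Item~\ref{lem:corr_facts3} is invoked in the paper only in the genuinely mixed case $\corridx \neq \corridxsecond$ (the coinciding case being precisely item~\ref{lem:corr_facts4}); then the supports $\{(m,n),(n,m)\}$ of $\Is$ and $\{(p,q),(q,p)\}$ of $\Ik$ share no pair $(i,a)$, so the product $\Is_{ia}\Ik_{ia}$ vanishes identically and $\Qmatrix_{ii} = 0$. For item~\ref{lem:corr_facts4} with $\corridx = (mn)$, the sum $\sum_a \Is_{ia}^2 (\kappa_i - \kappa_a)$ is nonzero only when $i \in \{m,n\}$ and $a$ is its partner index in $\corridx$, producing exactly $2\, \Indicator(ij \in \corridx)(\kappa_i - \kappa_j)$.
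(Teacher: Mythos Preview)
Your proposal is correct and follows essentially the same route as the paper: differentiate $\Corrmatrix\Corrmatrix^{-1}=\Imatrix$ for item~1, apply the product rule to $\Gammamat=\Corrmatrix^{-1}\kappamat\Corrmatrix$ and take the limit for item~2, differentiate once more and evaluate at $\Corrmatrix=\Imatrix$ to obtain $\Qmatrix=\Ik\Is\kappamat+\Is\Ik\kappamat-\Is\kappamat\Ik-\Ik\kappamat\Is$, then read off the diagonals via the disjoint-support argument. The only cosmetic difference is that you differentiate the compact expression from item~2 directly, whereas the paper re-expands from scratch; the resulting $\Qmatrix$, $\Pmatrix$ and the combinatorics of their diagonals are identical.
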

\subsection{Proof of formula (\ref{lem:corr_facts1}).}
\begin{eqnarray}
\nonumber
\Corrmatrix \Corrmatrix^{-1} &=& \Imatrix
\\ \nonumber
\frac{\partial}{\partial \rhos} \left( \Corrmatrix \Corrmatrix^{-1} \right)
&=& \frac{\partial \Imatrix}{\partial \rhos}
\\ \nonumber
\Corrmatrix \frac{\partial \Corrmatrix^{-1} }{\partial \rhos}
&=& -  \frac{\partial \Corrmatrix}{\partial \rhos}   \Corrmatrix^{-1}
\\ \nonumber
\Corrmatrix \frac{\partial \Corrmatrix^{-1} }{\partial \rhos}
&=& -  \frac{\partial \Corrmatrix}{\partial \rhos}   \Corrmatrix^{-1}
\\ \nonumber
\frac{\partial \Corrmatrix^{-1} }{\partial \rhos}
&=& -  \Corrmatrix^{-1}\frac{\partial \Corrmatrix}{\partial \rhos}   \Corrmatrix^{-1}
\end{eqnarray}

\subsection{Proof of formula (\ref{lem:corr_facts2})}	
\begin{eqnarray}
\nonumber
\limrhoI \frac{\partial \Gammamat}{\partial \rhos} &=&
\limrhoI \frac{\partial \left(\Corrmatrix^{-1}\kappamat \Corrmatrix\right)}{\partial \rhos}
\\ \nonumber
&=& \limrhoI \frac{\partial \Corrmatrix^{-1}}{\partial \rhos} \kappamat \Corrmatrix +
\limrhoI \Corrmatrix^{-1} \kappamat \frac{\partial \Corrmatrix }{\partial \rhos}
\\ \nonumber
&=& \limrhoI \frac{\partial \Corrmatrix^{-1}}{\partial \rhos} \kappamat \Imatrix +
\Imatrix \kappamat \limrhoI \frac{\partial \Corrmatrix }{\partial \rhos}
\\ \nonumber
&=& -\limrhoI \frac{\partial \Corrmatrix}{\partial \rhos} \kappamat +
\kappamat \limrhoI \frac{\partial \Corrmatrix }{\partial \rhos}
\\ \nonumber
&=& -\Is \kappamat + \kappamat \Is
\\ \nonumber
&=& \kappamat \Is -\Is \kappamat.
\end{eqnarray}
\subsection{Proof of formula (\ref{lem:corr_facts3})}
\begin{eqnarray}
\frac{\partial^2 \Gammamat}{\partial \rhos \partial \rhok} &=&
\frac{\partial^2}{\partial \rhos \partial \rhok} \Corrmatrix^{-1} \kappamat \Corrmatrix
\\ \nonumber
&=&
\frac{\partial^2 \Corrmatrix^{-1}}{\partial \rhos \partial \rhok} \kappamat \Corrmatrix
+
\frac{\partial \Corrmatrix^{-1}}{\partial \rhos} \kappamat \frac{\partial \Corrmatrix}{\partial \rhok}
+
\frac{\partial \Corrmatrix^{-1}}{\partial \rhok} \kappamat \frac{\partial \Corrmatrix}{\partial \rhos}
+
\Corrmatrix^{-1} \kappamat \frac{\partial^2 \Corrmatrix}{\partial \rhos \partial \rhok}
\\ \nonumber
&=&
-\frac{\partial}{\partial \rhok}\left[\Corrmatrix^{-1} \frac{\partial \Corrmatrix}{\partial \rhos}  \Corrmatrix^{-1}\right] \kappamat \Corrmatrix
-
\Corrmatrix^{-1} \frac{\partial \Corrmatrix}{\partial \rhos} \Corrmatrix^{-1} \kappamat \frac{\partial \Corrmatrix}{\partial \rhok}
\\ \nonumber
&-&
\Corrmatrix^{-1} \frac{\partial \Corrmatrix}{\partial \rhok} \Corrmatrix^{-1} \kappamat \frac{\partial \Corrmatrix}{\partial \rhos}
\\ \nonumber
& = &
-\frac{\partial \Corrmatrix^{-1} }{\partial \rhok} \frac{\partial \Corrmatrix}{\partial \rhos}  \Corrmatrix^{-1} \kappamat \Corrmatrix
- \Corrmatrix^{-1} \frac{\partial \Corrmatrix}{\partial \rhos}  \frac{\partial \Corrmatrix^{-1} }{\partial \rhok} \kappamat \Corrmatrix
\\ \nonumber
&-&
\Corrmatrix^{-1} \frac{\partial \Corrmatrix}{\partial \rhos} \Corrmatrix^{-1} \kappamat \frac{\partial \Corrmatrix}{\partial \rhok}
-
\Corrmatrix^{-1} \frac{\partial \Corrmatrix}{\partial \rhok} \Corrmatrix^{-1} \kappamat \frac{\partial \Corrmatrix}{\partial \rhos}
\\ \nonumber
& = &
\Corrmatrix^{-1} \frac{\partial \Corrmatrix}{\partial \rhok}  \Corrmatrix^{-1} \frac{\partial \Corrmatrix}{\partial \rhos}  \Corrmatrix^{-1} \kappamat \Corrmatrix
+ \Corrmatrix^{-1} \frac{\partial \Corrmatrix}{\partial \rhos} \Corrmatrix^{-1} \frac{\partial \Corrmatrix}{\partial \rhok} \Corrmatrix^{-1} \kappamat \Corrmatrix
\\ \nonumber
&-&
\Corrmatrix^{-1} \frac{\partial \Corrmatrix}{\partial \rhos} \Corrmatrix^{-1} \kappamat \frac{\partial \Corrmatrix}{\partial \rhok}
-
\Corrmatrix^{-1} \frac{\partial \Corrmatrix}{\partial \rhok} \Corrmatrix^{-1} \kappamat \frac{\partial \Corrmatrix}{\partial \rhos}
\\ \nonumber
\Qmatrix &=&
\Ik \Is \kappamat  + \Is \Ik \kappamat - \Is \kappamat \Ik - \Ik \kappamat \Is
\\ \nonumber
\Qmatrix_{ii} &=& \sums \sumk \left[
\Ik_{is} \Is_{sk} \kappamat_{si}  +  \Is_{is} \Ik_{sk} \kappamat_{si} -
\Is_{is} \kappamat_{sk} \Ik_{ki} - \Ik_{is} \kappamat_{sk} \Is_{ki} \right].
\\ \nonumber
\Qmatrix_{ii} &=& \sums \left[
\Ik_{is} \Is_{si} \kappamat_{ii}  +  \Is_{is} \Ik_{si} \kappamat_{ii} -
\Is_{is} \kappamat_{ss} \Ik_{si} - \Ik_{is} \kappamat_{ss} \Is_{si} \right]
\\ \nonumber
\Qmatrix_{ii} &  = & 0.
\end{eqnarray}
Since $\Is_{is} = 0$ if $\Ik_{si} = 1$ for each $s = 1.. n$ and vice versa.

\subsection{Proof of formula (\ref{lem:corr_facts4})}
\begin{eqnarray}
\Pmatrix_{ii} &=& 2 \sums \left[ \Is_{is} \Is_{si} \kappamat_{ii} - \Is_{is} \kappamat_{ss} \Is_{si} \right]
\\ \nonumber
\Pmatrix_{ii} &=& 2 \Indicator(ij \in \corridx) \left[ \kappa_i - \kappa_j\right]
\end{eqnarray}

\end{document}